\def\amsbb{\use@mathgroup \M@U \symAMSb}
\theoremstyle{plain}
\newtheorem{thm}{Theorem}[section]
\newtheorem{lemma}{Lemma}[section]
\newtheorem{prop}{Proposition}[section]
\newtheorem{defi}{Definition}[section]
\newtheorem{obs}{Remark}[section]
\newcommand{\cC}{{\mathcal C}}
\numberwithin{equation}{section}
\def\C{\mathcal {C}}
\def\A{{\mathcal{C}}^\infty(Q)} 
\def\AT2{{\mathcal{C}}^\infty(Q_k^2)}
\def\U{\mathcal {U}}
\def\R{\amsbb{R}}
\def\m{\mathfrak m}
\def\To{\longrightarrow}
\def\Tau{\mathcal{T}}
\def\be{\begin{equation}
}
\def\ee{\end{equation}
}
\def\iiota{\iota}
\begin{document}

\title[Newton's second law in field theory]
 {Newton's second law in field theory}

\author{ R. J.  Alonso-Blanco and J.  Mu\~{n}oz-D{\'\i}az}

\address{Departamento de Matem\'{a}ticas, Universidad de Salamanca, Plaza de la Merced 1-4, E-37008 Salamanca,  Spain.}
\email{ricardo@usal.es,clint@usal.es}

\begin{abstract}
In this article we present a natural generalization of Newton's Second Law valid in field theory, i.e., when the parameterized curves are replaced by parameterized submanifolds of higher dimension. For it we introduce what we have called the geodesic $k$-vector field, analogous to the ordinary geodesic field and which describes the inertial motions (i.e., evolution in the absence of forces). From this generalized Newton's law, the corresponding Hamilton's canonical equations of field theory (Hamilton-De Donder-Weyl equations) are obtained by a simple procedure. It is shown that solutions of generalized Newton's equation also hold the canonical equations. However, unlike the ordinary case, Newton equations determined by different forces can define equal Hamilton's equations. 
\end{abstract}
\bigskip

\maketitle

\setcounter{tocdepth}{1}


\tableofcontents

\section{Introduction}\label{intro}

At the core of Classical Mechanics is Newton's second law: force equals mass times acceleration. Later elaborations led to the equivalent and successful  Lagrangian and Hamiltonian formulations of mechanics.  The passage from ordinary mechanics to field theory is done precisely in Lagrangian and Hamiltonian terms. Therefore, a natural question arises: is there a version of Newton's second law in field theory? We give an affirmative answer in this article.
\medskip

 Let us recall the basic points of the ordinary version. Newton's second law says, essentially, that the motions are governed by second order differential equations (a specific type of vector fields defined on the tangent bundle of the configuration manifold $Q$). When $Q$ is endowed with a pseudoriemannian metric, we can set which are the inertial motions (integral curves of the geodesic field). In addition, the metric gives us an isomorphism of the tangent bundle with the cotangent bundle, where the Liouville form is defined. With the above ingredients (a tangent field describing inertial motion and a symplectic structure on $TQ$) we can associate a second order differential equation to each given \emph{force} (see \cite{MecanicaMunoz,RM}).
 \medskip

 When dealing with field theory, the curves describing the motions are replaced by submanifolds of dimension $k$ greater than one. Hence, in order to get the generalized Newton's law, we have to replace the tangent bundle $TQ$ with an analogous space for $k>1$: the bundle of $(k, 1)$-velocities, $Q_k^1$, a particular case of Weil bundle (see \cite{Weil, KolarNatural, MMR1}). It is also necessary to consider second order partial differential equations and, in addition,  to obtain a generalized geodesic field to be able to define the \emph{inertial motion}: we will see that this is possible by using the metric. Also by means of the metric, we obtain an isomorphism of $Q_k^1$ to its dual (see below), where the generalized Liouville form is defined (\emph{polysimplectic structure} \cite{Gunther}). Once this is done, the rest goes parallel to the ordinary case.
 \medskip

 This generalized Newton's second law is an equation with values in certain linear endomorphisms space. When we take the trace, we recover the Euler-Lagrange and the canonical Hamiltonian formalisms (Hamilton-De Donder-Weyl equations). As a consequence, all the solutions of the Newton's equation also hold the associated Euler-Lagrange and canonical equations. However, in general, the converse is not true. Hence, there is an essential difference with the classical case: now ($k>1$), Newton's law is no longer equivalent to the Euler-Lagrange equations or the canonical equations. In that sense, we can say that Newton's equation  distinguishes between systems that, from the Lagrangian and Hamiltonian points of view, look identical.
 \medskip

To develop this program, we will follow \cite{MecanicaMunoz} (see also \cite{RM}) where has been given a presentation of Mechanics allowing several conceptual advances (v.g., a clear understanding of what \emph{absolute time} \cite{TiempoMunozAlonso} and \emph{relativistic forces} \cite{RelatividadMunoz,RM} are).
\medskip

The framework used in this paper is close to that of Günther \cite{Gunther}, who developed the so-called \emph{polysymplectic formalism}, which has been extended  in several directions since then (see for instance \cite{Awane, LecandaSS, RRS, deLeon} and references therein; it should be clarified that there are other different approaches to field theory in their Lagrangian and Hamiltonian forms; see, v.g., \cite{Rund,GPerez,GS})).
\medskip

The structure of this paper is the following: in Section \ref{S:velocities} they are introduced the bundles of $(k,1)$ and $(k,2)$-velocities;
  Section~\ref{S:geodesic} presents the definition of second order PDE and, a particular case,  the \emph{geodesic $k$-field}, is constructed;
    in Section~\ref{S:Liouville} the dual spaces are considered: bundles of \emph{covelocities} joint with the \emph{generalized Liouville form}; with all the previous work done, in Section~\ref{S:Newton} the \emph{Newton's second law for fields} is obtained; by taking the trace  we get the Euler-Lagrange and Hamilton canonical equations in Section~\ref{S:Lagrange} and, also, we obtain their variational properties.
  \bigskip

\noindent{\bf Notation and conventions.} Along the paper $Q$ will be an $n$-dimensional $\C^\infty$ manifold. Einstein convection on repeated indexes will be assumed. For an smooth map $\varphi$ between manifolds, $\varphi_*$ will denote its tangent map (at a given point), and $\varphi^*$ will be the associated pull-back of differential forms. The inner contraction of a vector field $D$ with a differential form $\omega$ is denoted by $\iiota_{D}\omega$.

\bigskip

\section{Newton's second law}
In order to present our goal and also to fix notation, it will be useful to recall the model of the Classical Mechanics. We will follow the presentation in \cite{MecanicaMunoz,RM}.
\medskip

Let $Q$ be an $n$-dimensional smooth manifold, and $\pi\colon TQ\to Q$ be its
tangent bundle.

There is a canonical derivation $$\dot d\colon\A\to\C^\infty(TQ),\quad\text{ defined by}\quad f\mapsto\dot df:=\dot f,$$ where
$\dot f$ is $df$ considered as a function on $TQ$; that is to say,
$$\dot f(v_q):=v_q(f),\quad\forall f\in\A, v_q\in TQ.$$

When $\{q^1,\dots,q^n\}$  is a coordinate system on $\U\subset Q$, the set $\{q^1,\dots,q^n,\dot q^1,\dots,\dot q^n\}$ defines a local chart on $\pi^{-1}(\U)\subseteq TQ$. In these coordinates,
 $$\dot d=\dot q^i\frac{\partial}{\partial q^i}.$$

\begin{defi} \textbf{\em (Second Order Differential Equation).}\label{def: second_order_equation}
A vector field $D$ on $TQ$ is a \emph{second order differential
equation} when its restriction (as derivation) to the subring
$\cC^\infty(Q)$ of $\cC^\infty(TQ)$ is $\dot d$. This is
equivalent to have $\pi_*(D_{v_q})=v_q$ at each point $v_q\in
T_qQ$.
\end{defi}

Locally, a second order differential equation is
\begin{equation*}
D=\dot q^i\frac{\partial}{\partial q^i}+f^i(q,\dot
q)\frac{\partial}{\partial \dot q^i}\\
\end{equation*}
for suitable functions $f^i$ on $TQ$.

A curve $q^i=q^i(\gamma(t))$, $\dot q^i=\dot q^i(\gamma(t))$ in $TQ$ is integral for such a vector field $D$ if and only if
$dq^i(\gamma(t))/dt=\dot q^i(\gamma(t))$ and $d\dot q^i(\gamma(t))/dt=f^i(q^j(\gamma(t)),dq^j(\gamma(t))/dt)$. That is to say, if it holds the following system of second order ordinary differential equations
$$\frac{d^2q^i(\gamma(t))}{dt^2}=f^i(q^j(\gamma(t)),dq^j(\gamma(t))/dt),\quad i=1,\dots,n.$$

Let $T^*Q$ be the cotangent bundle of $Q$ and $\overline\pi\colon T^*Q\to Q$
the canonical projection. Recall that the \emph{Liouville form}
$\theta$ on $T^*Q$ is defined by $\theta_{\sigma_q}=\overline\pi^*(\sigma_q)$
for each $\sigma_q\in T^*_qQ$.
\medskip

The $2$-form $d\theta$ is the natural symplectic form
associated to $T^* Q$. In local coordinates $\{q^1,\dots, q^n\}$ in
$Q$, and the corresponding $\{q^1,\dots, q^n,p_1,\dots, p_n\}$ for
$T^*Q$ (defined by $p_i(\sigma_q):=\sigma_q(\partial/\partial q^i)$), we have
\begin{equation*}
\theta=p_idq^i,\qquad d\theta=dp_i\wedge dq^i \ .
\end{equation*}

Let $g$ be a pseudoriemannian metric in $Q$.
Then we have an isomorphism of vector fiber bundles $TQ\simeq T^*Q$, $v_q \mapsto \iiota_{v_q}g$
($\iiota_{v_q}g$ is the inner contraction of $v_q$ with $g$). Using
the above isomorphism we can transport all the structures of
$T^*Q$ to $TQ$. In particular, we will work with the Liouville form $\theta$
and the symplectic form $d\theta$ transported to $TQ$ with the
same notation.

 If the expression in
local coordinates of the metric is $g=g_{ji}\ dq^jdq^i$,
then the local equations for the isomorphism $TQ\approx T^*Q$ are
\begin{equation*}\label{isomorfismo}
p_j=g_{ji}\ \dot q^i,
\end{equation*}
and the Liouville form on $TQ$ is given by
\begin{equation*}\label{Liouvilletangente}
\theta=g_{ji}\ \dot q^idq^j.
\end{equation*}

\begin{defi}\textbf{\em (Kinetic Energy).}\label{cinetica}
The function $T$ on $TQ$ defined by $T(v_q):=\frac 12 g(v_q,v_q)$, $v_q\in TQ$,  is the \emph{kinetic
energy} associated to the metric $g$. It holds, equivalently,  $T=\frac 12\,g(\dot d,\dot d)$.
\end{defi}

Given a tangent vector $v_q\in T_qQ$, there is a  geodesic curve on $Q$ such that its initial velocity at $q$ is exactly $v_q$. As a consequence, trough each point $v_q\in TQ$ passes the prolongation of a unique geodesic curve. In this way, a vector field $D_G$ on $TQ$ is defined. In coordinates,
\begin{equation}\label{geodesico}
D_G=\dot q^i\frac{\partial}{\partial q^i}-\Gamma_{ij}^\ell \dot q^i
\dot q^j \frac{\partial}{\partial \dot q^\ell}
\end{equation}

In fact, $D_G$ can be characterized in the following terms:
\begin{lemma} 
The geodesic field $D_G$ of the metric $g$ is the second order
differential equation determined by
\begin{equation}\label{ecuaciongeodesica}
\iiota_{D_G}d\theta+dT=0 \ .
\end{equation}
\end{lemma}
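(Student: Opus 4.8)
The plan is to prove the characterization by a direct computation in a local chart; the nondegeneracy of $g$ (equivalently, of $d\theta$) will do all the work. Fix coordinates $\{q^1,\dots,q^n\}$ on $\U\subseteq Q$, with induced chart $\{q^i,\dot q^i\}$ on $\pi^{-1}(\U)\subseteq TQ$, and write $g=g_{ji}\,dq^jdq^i$ with $(g_{ji})$ symmetric and invertible. From $\theta=g_{ji}\dot q^i\,dq^j$ one gets
\[
d\theta=\frac{\partial g_{ji}}{\partial q^k}\,\dot q^i\,dq^k\wedge dq^j+g_{ji}\,d\dot q^i\wedge dq^j ,
\]
and from $T=\tfrac12\,g_{ji}\dot q^i\dot q^j$, using the symmetry of $g$,
\[
dT=\frac12\,\frac{\partial g_{ji}}{\partial q^k}\,\dot q^i\dot q^j\,dq^k+g_{ji}\,\dot q^j\,d\dot q^i .
\]
Now I would take an \emph{arbitrary} vector field $D=a^i\,\partial/\partial q^i+f^i\,\partial/\partial\dot q^i$ on $\pi^{-1}(\U)$, with $a^i,f^i\in\C^\infty(TQ)$ unknown, and impose $\iiota_D d\theta+dT=0$ (i.e.\ \eqref{ecuaciongeodesica} with $D_G$ replaced by $D$).

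First I would isolate the component of $\iiota_D d\theta+dT=0$ along the covectors $d\dot q^i$. Only the second summands of $d\theta$ and of $dT$ contribute such terms, and the coefficient of $d\dot q^i$ comes out to be $g_{ji}(\dot q^j-a^j)$. Since $(g_{ji})$ is invertible this forces $a^j=\dot q^j$; that is, \emph{every} solution of \eqref{ecuaciongeodesica} automatically restricts to $\dot d$ on $\C^\infty(Q)$, hence is a second order differential equation in the sense of Definition~\ref{def: second_order_equation}. So the ``second order'' part of the statement is not an extra hypothesis but a byproduct of the equation itself.

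It remains to pin down the $f^i$. Substituting $a^j=\dot q^j$ and collecting the coefficient of $dq^\ell$ in $\iiota_D d\theta+dT=0$ yields
\[
g_{\ell i}\,f^i+\frac{\partial g_{\ell i}}{\partial q^k}\,\dot q^i\dot q^k-\frac12\,\frac{\partial g_{ik}}{\partial q^\ell}\,\dot q^i\dot q^k=0 .
\]
Symmetrizing $\dfrac{\partial g_{\ell i}}{\partial q^k}\,\dot q^i\dot q^k=\tfrac12\bigl(\partial_k g_{\ell i}+\partial_i g_{\ell k}\bigr)\dot q^i\dot q^k$ and recognizing the Christoffel symbols of the first kind $\Gamma_{ik,\ell}=\tfrac12\bigl(\partial_k g_{\ell i}+\partial_i g_{\ell k}-\partial_\ell g_{ik}\bigr)$, the relation becomes $g_{\ell i}f^i=-\Gamma_{ik,\ell}\,\dot q^i\dot q^k$, whence $f^\ell=-g^{\ell m}\Gamma_{ik,m}\,\dot q^i\dot q^k=-\Gamma^\ell_{ik}\,\dot q^i\dot q^k$. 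This is exactly the vertical part of $D_G$ in \eqref{geodesico}, so the unique solution of \eqref{ecuaciongeodesica} is the geodesic field. I expect the only delicate step to be this last one: one must keep careful track of which dummy indices are symmetrized and relabelled so that the first derivatives of $g$ genuinely assemble into $\Gamma_{ik,\ell}$; everything else is routine substitution. (A coordinate-free alternative: $d\theta$ is nondegenerate, so \eqref{ecuaciongeodesica} has a unique solution, which by the very form of the equation is the Hamiltonian vector field of $T$ for $d\theta$; identifying it with the geodesic spray is the classical fact that, under the metric isomorphism $TQ\simeq T^*Q$, the cogeodesic flow of $g$ is the geodesic flow.)
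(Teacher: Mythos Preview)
Your computation is correct: the $d\dot q^i$-component forces $a^j=\dot q^j$ by nondegeneracy of $g$, and the $dq^\ell$-component then yields exactly $g_{\ell i}f^i=-\Gamma_{ik,\ell}\,\dot q^i\dot q^k$, so $D=D_G$ is the unique solution of \eqref{ecuaciongeodesica}. The paper itself does not supply a proof of this lemma---it is stated as a known characterization, with the local expression \eqref{geodesico} given beforehand---so there is nothing to compare against; your direct coordinate argument (together with the coordinate-free remark identifying $D_G$ with the Hamiltonian field of $T$) fills that gap cleanly.
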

 The geodesic field $D_G$ is chosen as the
origin in the affine bundle of second order differential
equations. With this choice we establish a one-to-one
correspondence between second order differential equations $D$ and
vertical tangent fields V.
$$
D\longleftrightarrow V:=D-D_G.
$$

On the other hand, a differential 1-form on $TQ$ is said to be \emph{horizontal} if vanishes when is applied to a vertical vector (i.e., a vector in the kernel of $\pi_*$, where $\pi\colon TQ\to Q$ is the canonical projection). In particular, Liouville form $\theta$ is horizontal.
\medskip

Then, it can be proved the following main result
\begin{thm}[\cite{MecanicaMunoz}]\label{teoremaalfa}
The metric $g$ establishes a one-to-one correspondence between
second order differential equations and horizontal $1$-forms in
$TQ$.
The second order differential equation $D$ and the horizontal
$1$-form $F$ that correspond to each other are related by
\begin{equation}\label{formulaalpha}
\iiota_Dd\theta+dT=F\ .
\end{equation}
\end{thm}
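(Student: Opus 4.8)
The plan is to reduce the statement to a purely linear fact about the symplectic form $d\theta$ and then settle it by a short computation in local coordinates, the non-degeneracy of $g$ doing the essential work. First I would use the characterization \eqref{ecuaciongeodesica} of the geodesic field. Given an arbitrary second order differential equation $D$, write $D=D_G+V$ with $V:=D-D_G$ vertical, as in the correspondence recalled just above. By linearity of the inner contraction and by \eqref{ecuaciongeodesica},
\[
\iiota_D d\theta+dT=\bigl(\iiota_{D_G}d\theta+dT\bigr)+\iiota_V d\theta=\iiota_V d\theta .
\]
Hence, modulo the affine identification $D\leftrightarrow V$, the assignment $D\mapsto F:=\iiota_D d\theta+dT$ is exactly the $\cC^\infty(TQ)$-linear map $V\mapsto\iiota_V d\theta$ sending $\pi$-vertical vector fields on $TQ$ to $1$-forms on $TQ$. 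So it is enough to prove that this map is a bijection onto the space of horizontal $1$-forms.

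For the computation I would work in coordinates $\{q^i,\dot q^i\}$, in which $\theta=g_{ji}\dot q^i\,dq^j$ and a vertical field reads $V=V^\ell\,\partial/\partial\dot q^\ell$. Since $d\theta=d(g_{ji}\dot q^i)\wedge dq^j$, $\iiota_V dq^j=0$ and $\iiota_V d(g_{ji}\dot q^i)=g_{j\ell}V^\ell$, we obtain
\[
\iiota_V d\theta=(g_{j\ell}V^\ell)\,dq^j .
\]
This shows immediately that $\iiota_V d\theta$ is horizontal. Conversely, a horizontal $1$-form is locally $F=F_j\,dq^j$, and the equation $\iiota_V d\theta=F$ becomes the linear system $g_{j\ell}V^\ell=F_j$, which, $g$ being non-degenerate, has the unique solution $V^\ell=g^{\ell j}F_j$. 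Thus $V\mapsto\iiota_V d\theta$ is injective ($g_{j\ell}V^\ell=0$ forces $V=0$) and surjective onto horizontal $1$-forms, hence a bijection. Transporting back via $D=D_G+V$, for every horizontal $1$-form $F$ there is precisely one second order differential equation $D$ with $\iiota_D d\theta+dT=F$, which is formula \eqref{formulaalpha}.

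I do not expect a serious obstacle. The two points needing a word of care are: that the vertical field defined locally by $V^\ell=g^{\ell j}F_j$ glues to a globally defined field — automatic, since both $\iiota_{(\cdot)}d\theta$ and $F$ are intrinsic, so the correspondence is coordinate-free — and the correct use of the sign rule $\iiota_V(\alpha\wedge\beta)=(\iiota_V\alpha)\,\beta-\alpha\,(\iiota_V\beta)$ in the computation above. Conceptually the only real input is that the canonical symplectic structure pairs vertical directions isomorphically with horizontal covectors, and that once everything has been transported from $T^*Q$ to $TQ$ this pairing is carried out precisely by the metric $g$; its non-degeneracy is what makes the correspondence one-to-one.
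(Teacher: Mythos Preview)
The proposal is correct. The paper does not itself prove this theorem (it is quoted from \cite{MecanicaMunoz}), but your argument is precisely the $k=1$ instance of the scheme the paper uses for the field-theoretic generalization in Section~\ref{S:Newton}: write $D=D_G+V$, use the geodesic identity to reduce $\iiota_D d\theta+dT$ to $\iiota_V d\theta$, and then check in coordinates---with the non-degeneracy of $g$ supplying the inverse---that $V\mapsto\iiota_V d\theta$ is a bijection from vertical fields onto horizontal $1$-forms; this is exactly the content of Lemmas~\ref{diferenciaSODE} and~\ref{vertical} followed by the short proof of the Newton's-second-law theorem there.
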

\medskip

The correspondence (\ref{formulaalpha}) will be called the \emph{
Newton's Second Law}: \emph{under the influence of a force $F$ the
trajectories of the mechanical system satisfy the second order differential equation $D$ associated with $F$}. Indeed, $F$ is a force, $V=D-D_G$ will be the acceleration (relative to the \emph{inertial motion} $D_G$) and $\iiota_Vd\theta=F$ relates both of them.

\bigskip

\section{Bundle of contact elements. $k$-Velocities}\label{S:velocities}

The points of the manifold $Q$ correspond to morphisms of algebras $\A\to\R$. If we substitute $\R$ by the  \emph{dual numbers} $\R[\epsilon]/\epsilon^2$ we get the tangent vectors as a kind of generalized ``points'' of $Q$. That is the idea behind the theory developed by Weil in his paper on `points proches' or \emph{near elements} \cite{Weil} (see also \cite{MMR1,KolarNatural}). In that theory they are considered morphisms of the ring of smooth functions of a given manifold to a \emph{Weil algebra}, i.e., a finite dimensional local and rational $\R$-algebra. The main examples of Weil algebras are the quotients
\begin{equation}\label{Rkl}
\R_k^\ell:=\R[\epsilon^1,\dots,\epsilon^k]/(\epsilon^1,\dots,\epsilon^k)^{\ell+1},
\end{equation}
where $\R[\epsilon^1,\dots,\epsilon^k]$ are the polynomials in $k$ undetermined $\epsilon^i$, and $(\epsilon^1,\dots,\epsilon^k)$ is the maximal ideal generated by them (so that (\ref{Rkl}) is the ring of polynomials truncated at the degree $\ell$).

\begin{defi}\label{contactelements}
 A \emph{$(k,1)$-contact element} or \emph{$(k,1)$-velocity} $X_q$ at $q\in Q$ is one of the following equivalent objects:
\begin{enumerate}
\item A $k$ tuple $X_1,\dots X_k$ of vectors in $T_qQ$
\item A linear map $X_q\colon\R^k\to T_qQ$, or $X_q=X_\alpha\otimes e^\alpha$, where $X_\alpha\in T_qQ$ and $\{e^\alpha\}$ is the canonical  basis of $(\R^k)^*$.
\item The tangent map $d_t\gamma$ (or, equivalently, its dual $(d_t\gamma)^*$, the cotangent map) associated with an smooth map $\gamma\colon \U\to Q$, where $\U$ is a neighborhood of $t\in\R^k$ and $\gamma(t)=q$.
\item An $\R$-algebra morphism $\C^\infty(Q)\to \R_k^1$, which when composed with the canonical $\R_k^1\to\R$ is the map $f\mapsto f(q)$, $f\in\C^\infty(Q)$.
\end{enumerate}
The set of $(k,1)$-contact elements of $Q$ is denoted by $Q_k^1$. As a particular case, $Q_1^1=TQ$.
\end{defi}

The equivalence of the above items 1, 2 and 3 is obvious because $T_t\,\U=\R^k$. Item 4 follows from 3 by taking $\gamma^*\colon\C^\infty(Q)\to\C^\infty(\U)$ followed be the quotient map
$$\A\to\C^\infty(\U)/\m^2_t\simeq \R_k^1,$$
where $\m_t$ is the maximal ideal $\{f\in\C^\infty(\U)\,|\,f(t)=0\}$.

Let us identify $X_q\in Q_k^1$ with a linear map $\R^k\to T_qQ$; if $\{q^j\}_{j=1,\dots,n}$ are local coordinates around $q\in Q$, we have
  $$X_q( t^1,\dots, t^k)=  t^\alpha \dot q_\alpha^j\left(\frac{\partial}{\partial q^j}\right)_q,\qquad \forall( t^1,\dots, t^k)\in\R^k,$$
  for suitable scalars $\dot q_\alpha^j=\dot q_\alpha^j(X_q)$. Equivalently,  $X_q$ considered as a morphism $\A\to\R_k^1$ is given by
$q^j\mapsto q^j(q)+\dot q^j_\alpha(X_q)\epsilon^\alpha$.
The set $\{q^j,\dot q_\alpha^j\}$ defines a local chart on $Q_k^1$; so $Q_k^1$ is endowed with a smooth manifold structure. With this structure, the natural map $\pi\colon Q_k^1\to Q$ is a $\C^\infty$ regular projection. 
\medskip

 For each $\alpha=1,\dots,k,$ we define the derivative:
\begin{equation}\label{puntoderivada}
\dot\partial_\alpha\colon\C^\infty(Q)\to \C^\infty(Q_k^1),\qquad \dot\partial_\alpha(f)(X_q):=X_\alpha f,
\end{equation}
for all $f\in\A$ and $X_q=(X_1,\dots,X_k)\in Q_k^1$. In local coordinates, $$\dot\partial_\alpha=\dot q_\alpha^i\,\frac{\partial}{\partial q^i}.$$

 \subsection{First order prolongation of a parameterized submanifold}
 Let $\U$ be an open subset of $\R^k$ with coordinates $t^1,\dots,t^k$ (associated with the canonical basis $\{e_\alpha\}$). Each smooth map $\gamma\colon \U\to Q$ can be prolonged to the manifold of $(k,1)$ contact elements in the natural way:
 \begin{equation}\label{prolongation1}
 \gamma^1\colon \U\to Q_k^1,\quad t\mapsto \gamma^1(t):=(\gamma_*)_t,
 \end{equation}
 or, within the obvious identifications, $\gamma^1(t)=(\gamma_*)_t(\partial/\partial {t^1},\dots,\partial/\partial t^k)_t$. So that, locally
 $\gamma^1$ is written as
 \begin{equation}\label{prolongaciongamma}
 q^i=q^i(\gamma(t)),\qquad \dot q^i_\alpha=\frac{\partial q^i(\gamma(t))}{\partial t^\alpha}.
 \end{equation}

 \subsection{First order differential equation defined by a field of $k$-velocities}
 Let $X\colon Q\to Q_k^1$ a section of the natural projection $Q_k^1\to Q$. This is equivalent to a collection of $k$ vector fields $(X_1,\dots,X_k)$ on $Q$. In local coordinates
 $$X_\alpha=f_\alpha^i(q)\frac{\partial}{\partial q^i},$$
 for certain functions $f^i_\alpha\in\A$.
 From $X$ is easy to define a system of first order partial differential equations:
 \begin{defi}\label{FOPDE}
 Let $\U$ be an open subset of $\R^k$ (coordinates $t^1,\dots,t^k$). A smooth map $\gamma\colon \U\to Q$ is a solution of the \emph{system of first order partial differential equations defined by $X$} if, for each $t\in \U$, it holds $\gamma^1(t)=X_{\gamma(t)}$ (i.e., $\gamma^1=X\circ\gamma$).
 \end{defi}

 By using (\ref{prolongaciongamma}), in local coordinates, $\gamma$ is a solution of $X$ if
 \begin{equation}\label{FOPDElocal}
 \frac{\partial q^i(\gamma(t))}{\partial t^\alpha}=f_\alpha^i(q^1(\gamma(t)),\dots,q^k(\gamma(t))),\quad \alpha=1,\dots,k.
 \end{equation}

 When $k=1$, Definition \ref{FOPDE} matches the well-known differential equation determined by a tangent vector field.

 \subsection{Second order contact elements}
 Item 4 in Definition \ref{contactelements} is the most easily generalizable: it is sufficient to substitute $\R_k^1$ by other suitable Weil algebra. For example, we can define second order contact elements:
 \begin{defi}
 A  \emph{$(k,2)$-contact element} or \emph{$(k,2)$-velocity} $S_q$ at $q\in Q$ is an $\R$-algebra morphism $\C^\infty(Q)\to \R_k^2$, which when composed with $\R_k^2\to\R$ is the map $f\mapsto f(q)$, $f\in\C^\infty(Q)$. In local coordinates such a map is described by
 $$q^i\mapsto q^i(q)+\dot q^i_\alpha(S_q)\epsilon^\alpha+\ddot q^i_{\alpha\beta}(S_q)\,\epsilon^\alpha\epsilon^\beta,$$
 for appropriate scalars  $\dot q^i_\alpha(S_q)$, $\ddot q^i_{\alpha\beta}(S_q)$ (symmetric in $\alpha\beta$). The functions $q^i,\dot q^i_\alpha,\ddot q^i_{\alpha\beta}$ so defined, serve as a local chart. In this way, $Q_k^2$  is endowed with a smooth manifold structure. There is a natural projection $Q_k^2\to Q_k^1$, by composing with $\R^2_k\to\R^1_k$ and is given in coordinates by $(q^i,\dot q^i_\alpha,\ddot q^i_{\alpha\beta})\mapsto (q^i,\dot q^i_\alpha)$.
 \end{defi}

 In fact, given an arbitrary Weil algebra $A$ is possible to define the manifold of $A$-near points $Q^A$ \cite{Weil} (also known as Weil bundle and denoted by  $T^AQ$ \cite{KolarNatural}, or bundle of $A$-contact elements in \cite{MMR1} in a terminology closer to that employed by S. Lie). For that, it is sufficient  to consider morphisms $\A\to A$, etc.

 It is possible, anyway, to relate this algebraic definition with infinitesimal calculus as follows: consider an smooth map
 $\gamma\colon \U\to Q$ with $\gamma(t)=q$ (as in Definition \ref{contactelements}). By transposition we get the associated map between rings of functions $\gamma^*\colon\C^\infty(Q)\to\C^\infty(\U)$. Next, we compose with the quotient by $\m_t^3$ (or, that is the same, we take the second order Taylor expansion):
 $$\A\to\C^\infty(\U)/\m_t^3\simeq \R_k^2,$$
which is, by definition, the  $(k,2)$-contact element with:
$$\dot q^i_\alpha(S_q)= \left(\frac{\partial q^i(\gamma(t))}{\partial t^\alpha}\right)_t,\quad
              \ddot q^i_{\alpha\beta}(S_q)=\left(\frac{\partial^2q^i(\gamma(t))}{\partial t^\alpha\partial t^\beta}\right)_t.$$

\medskip

From a $\gamma$ as above,  we can define its second order prolongation
\begin{equation}\label{prolongation2}
\gamma^2\colon \U\to Q_k^2,\qquad t\mapsto \gamma^2(t):=\{\text{morphism}\,\,\A\overset{\gamma^*}\to\C^\infty(\U)/\m_t^3\simeq \R_k^2\},
\end{equation}

We can consider a different but related concept: iterated bundles of velocities: $(Q_k^1)_k^1$. The relation between $Q_k^2$ and $(Q_k^1)_k^1$ generalizes that existent between the second order tangent bundle $T^2Q$ and the iterated tangent bundle $T(TQ)$ (case $k=1$).

\begin{prop}\label{Taylor}
There exist a canonical immersion $Q_k^2\subset (Q_k^1)_k^1$. In coordinates $\{q^i,\dot q^i_\alpha,\ddot q^i_{\alpha\beta}\}$ on $Q_k^2$ and  $\{q^i,\dot q^i_\beta, {(q^i)}^{\centerdot}_\alpha,{({\dot q}^i_\beta)}^{\centerdot}_\alpha\}$ on $(Q_k^1)_k^1$, that immersion is described by the equations
$${(q^i)}^{\centerdot}_\alpha=\dot q^i_\alpha,\quad {({\dot q}^i_\beta)}^{\centerdot}_\alpha=\ddot q^i_{\alpha\beta}.$$

\end{prop}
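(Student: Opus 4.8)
\emph{Proof proposal.} The plan is to realise the immersion explicitly through iterated prolongations. Given a $(k,2)$-velocity $S_q\in Q_k^2$, I would pick any smooth map $\gamma\colon\U\to Q$ on a neighbourhood $\U$ of a point $t\in\R^k$ with $\gamma(t)=q$ and $\gamma^2(t)=S_q$ (such $\gamma$ exists, e.g.\ a polynomial map read in a chart, as already noted before the statement). Form its first prolongation $\gamma^1\colon\U\to Q_k^1$ of (\ref{prolongation1}); this is again a smooth map out of $\U$, so (\ref{prolongation1}) may be applied once more, now to $\gamma^1$, yielding $(\gamma^1)^1\colon\U\to (Q_k^1)_k^1$, and one sets $\Phi(S_q):=(\gamma^1)^1(t)$. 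It then remains to check: (i) that $\Phi$ is well defined, i.e.\ independent of the representative $\gamma$; (ii) that in the indicated charts $\Phi$ has the stated local equations; and (iii) that $\Phi$ is a smooth immersion — indeed a diffeomorphism onto a submanifold, which is what justifies the notation $Q_k^2\subset(Q_k^1)_k^1$.

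Items (i) and (ii) come out of one computation, namely composing the two instances of (\ref{prolongaciongamma}). By (\ref{prolongaciongamma}), in a chart $\{q^i\}$ on $Q$ the map $\gamma^1$ has components $q^i=q^i(\gamma(t))$ and $\dot q^i_\alpha=\partial q^i(\gamma(t))/\partial t^\alpha$; applying (\ref{prolongaciongamma}) again to the map $\gamma^1$ with respect to the chart $\{q^i,\dot q^i_\alpha\}$ on $Q_k^1$, the components of $(\gamma^1)^1$ in the chart $\{q^i,\dot q^i_\beta,{(q^i)}^{\centerdot}_\alpha,{({\dot q}^i_\beta)}^{\centerdot}_\alpha\}$ are
\begin{align*}
q^i&=q^i(\gamma(t)), & \dot q^i_\beta&=\frac{\partial q^i(\gamma(t))}{\partial t^\beta},\\
{(q^i)}^{\centerdot}_\alpha&=\frac{\partial q^i(\gamma(t))}{\partial t^\alpha}, & {({\dot q}^i_\beta)}^{\centerdot}_\alpha&=\frac{\partial^2 q^i(\gamma(t))}{\partial t^\alpha\partial t^\beta}.
\end{align*}
Comparing with the local description of $S_q=\gamma^2(t)$ recalled above, namely $\dot q^i_\alpha(S_q)=\partial q^i(\gamma(t))/\partial t^\alpha$ and $\ddot q^i_{\alpha\beta}(S_q)=\partial^2 q^i(\gamma(t))/\partial t^\alpha\partial t^\beta$, one reads off exactly ${(q^i)}^{\centerdot}_\alpha=\dot q^i_\alpha(S_q)$ and ${({\dot q}^i_\beta)}^{\centerdot}_\alpha=\ddot q^i_{\alpha\beta}(S_q)$. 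Since the right-hand sides involve only the first- and second-order derivatives of $\gamma$ at $t$ — equivalently, only the coordinates of $S_q$ — the point $\Phi(S_q)$ does not depend on the chosen $\gamma$; this is (i), and the displayed equalities are (ii). As $\Phi$ is compatible with changes of chart on $Q$ (or simply because its defining recipe is intrinsic), $\Phi\colon Q_k^2\to(Q_k^1)_k^1$ is a globally defined smooth map.

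For (iii): in the above coordinates $\Phi$ reads $(q^i,\dot q^i_\alpha,\ddot q^i_{\alpha\beta})\mapsto(q^i,\dot q^i_\beta,\dot q^i_\alpha,\ddot q^i_{\alpha\beta})$, which is visibly injective with injective differential; its image is the closed submanifold of $(Q_k^1)_k^1$ cut out by ${(q^i)}^{\centerdot}_\alpha=\dot q^i_\alpha$ together with the symmetry ${({\dot q}^i_\beta)}^{\centerdot}_\alpha={({\dot q}^i_\alpha)}^{\centerdot}_\beta$, and a dimension count confirms $\Phi$ is a diffeomorphism onto it. I expect the only genuinely substantive point in the whole argument to be the order count behind (i): each prolongation differentiates the coordinates once, so $(\gamma^1)^1$ depends on $\gamma$ only through $\gamma^2$ and no third derivatives can appear — this is precisely what places the construction inside $(Q_k^1)_k^1$ and not in some larger jet space; everything else is bookkeeping. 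A more structural variant, which I would mention but not need, is to present $\Phi$ as the morphism of Weil bundles induced by the natural $\R$-algebra homomorphism $\R_k^2\to\R_k^1\otimes_\R\R_k^1$, $\epsilon^\alpha\mapsto\epsilon^\alpha\otimes 1+1\otimes\epsilon^\alpha$ (well defined because the cube of the maximal ideal of $\R_k^2$ is sent to $0$, and injective), through the identifications $Q_k^2=T^{\R_k^2}Q$ and $(Q_k^1)_k^1=T^{\R_k^1\otimes_\R\R_k^1}Q$; the immersion property is then read off in coordinates just as above.
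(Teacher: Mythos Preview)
Your argument is correct. Your primary route differs from the paper's: you construct $\Phi$ by choosing a representative $\gamma$ with $\gamma^2(t)=S_q$, forming the iterated prolongation $(\gamma^1)^1$, and then verifying in coordinates that the result depends only on the $2$-jet of $\gamma$; the paper instead works purely algebraically, first identifying $(Q_k^1)_k^1$ with $Q^{\R_k^1\otimes\R_k^1}$ via the general fact $(Q^A)^B=Q^{A\otimes B}$ (reduced to $Q=\R^n$, where $Q^A=Q\otimes A$), and then defining the map by post-composing each $S_q\colon\A\to\R_k^2$ with the algebra morphism $\mu\colon\R_k^2\to\R_k^1\otimes\R_k^1$, $\epsilon^\alpha\mapsto\epsilon^\alpha\otimes1+1\otimes\epsilon^\alpha$. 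Your approach is more elementary and self-contained, needing no external facts about Weil bundles, and it delivers the coordinate description and the immersion property directly; the paper's approach is more structural and makes canonicity immediate without a well-definedness check. Amusingly, the ``more structural variant'' you sketch in your final sentence is exactly the paper's proof, so you have in fact recorded both arguments.
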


 The proof, valid for a more general situation, can be found in \cite{MMR1} according the following steps: it is a local question and, when $Q=\R^n$, it is easy to see that $Q^A=Q\otimes A$ for any Weil algebra $A$; as a consequence, if $B$ is another Weil algebra, we have $(Q^A)^B=Q^{A\otimes B}$ for $Q=\R^n$ and, so, also for an arbitrary manifold $Q$ (this is a result in \cite{Weil}); now, for $A=B=\R_k^1$ we derive $(Q_k^1)_k^1=Q^{\R_k^1\otimes\R_k^1}$; considering the morphism of algebras
$\R_k^2\to \R_k^1\otimes\R_k^1$ defined by the rule $\mu\colon\epsilon^\alpha\mapsto \epsilon^\alpha\otimes 1+1\otimes \epsilon^\alpha$ we get the required smooth map
$Q_k^2\to (Q_k^1)_k^1$  as follows: given $S_q\in Q_k^2$ we consider the composition
$$\A\overset{S_q}\to\R_k^2\overset{\mu}\to\R_k^1\otimes\R_k^1$$
which defines a point of $(Q_k^1)_k^1$.

\begin{obs}
The theory of Weil bundles is strongly related to that of \emph{jet bundles}: the points of a Weil bundle $Q_k^r$  are, by definition, homomorphisms $\A\to \R_k^\ell$. Those that are epimorphisms constitute an open subset $\check Q_k^\ell\subset Q_k^\ell$. Taking the kernels of the homomorphisms in $\check Q_k^\ell$, we obtain the bundle of jets of order $r$ of $k$-dimensional submanifolds of $Q$, $J_k^r(Q)$, as a set of ideals of $\A$ (jets of sections of a fiber bundle $Q\to M$ are recovered as certain open subset of $J_k^\ell(Q)$); this point of view was developed in  \cite{MMR1} and entails a number of advantages; as an example of application of this approach, see \cite{MMRinvariantes}). The theory can be generalized to Weil bundles for any Weil algebra  $A$ as demonstrated in \cite{A1}, where it was showed that $\check Q^A\to J^A(Q)$ is a principal fiber bundle; even the notion of contact system was extended to these spaces \cite{AM}. Finally, a general version of jets understood as a part of the spectrum of the ring $\A$ was presented in \cite{Primary spectrum}.
\end{obs}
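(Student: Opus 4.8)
Since this Remark is expository, recording connections that are worked out in full in the cited references, I will confine the sketch to its two self-contained assertions: that the epimorphisms form an open subset $\check Q_k^\ell\subset Q_k^\ell$, and that passing to kernels realizes the jet bundle $J_k^\ell(Q)$ as a set of ideals of $\A$. The genuinely structural claim, namely that $\check Q^A\to J^A(Q)$ is a principal fiber bundle (and the subsequent remarks on contact systems and the primary spectrum), is proved in \cite{A1, AM, Primary spectrum}, and I would simply invoke it rather than reconstruct it.

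For openness, the plan is to reduce surjectivity to a rank condition on the linear part of the morphism. An $\R$-algebra morphism $\varphi\colon\A\to\R_k^\ell$ over $q\in Q$ is surjective if and only if its image generates $\R_k^\ell$. Since $\R_k^\ell$ is local Artinian with maximal ideal $\m$ generated by the degree-one classes $\epsilon^1,\dots,\epsilon^k$, a Nakayama-type argument reduces this to surjectivity of the composite $\A\to\R_k^\ell\to\R_k^\ell/\m^2\cong\R_k^1$. But this composite is exactly the underlying $(k,1)$-velocity $X_q$ of $\varphi$ (item 4 of Definition \ref{contactelements}), so $\varphi$ is surjective precisely when the vectors $X_1,\dots,X_k\in T_qQ$ are linearly independent, i.e.\ when the matrix $(\dot q^i_\alpha)$ has rank $k$. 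The rank-$k$ locus is cut out by the non-vanishing of the $k\times k$ minors and is therefore open, giving $\check Q_k^\ell$ open in $Q_k^\ell$ (and, incidentally, forcing $k\le n$).

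For the kernel description, the plan is to show that $\varphi\mapsto\ker\varphi$ is a well-defined surjection from $\check Q_k^\ell$ onto the set of ideals $\mathfrak a\subset\A$ with $\A/\mathfrak a\cong\R_k^\ell$, whose fibers are the $\mathrm{Aut}(\R_k^\ell)$-orbits. Indeed, for $\varphi\in\check Q_k^\ell$ the induced map $\A/\ker\varphi\to\R_k^\ell$ is an isomorphism, so $\ker\varphi$ is such an ideal; conversely every such ideal is a kernel once an isomorphism of $\A/\mathfrak a$ with $\R_k^\ell$ is chosen, and two epimorphisms have equal kernel exactly when they differ by an automorphism of $\R_k^\ell$. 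Taking these ideals as the points of $J_k^\ell(Q)$ then identifies $\check Q_k^\ell/\mathrm{Aut}(\R_k^\ell)$ with $J_k^\ell(Q)$, the set-theoretic shadow of the principal bundle of \cite{A1}.

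The main obstacle is not the openness, which is immediate once surjectivity is read off the linear part, but the kernel identification: one must check that the ideals so produced coincide with the classical order-$\ell$ jets of $k$-dimensional submanifolds of $Q$ (and, for a fibration $Q\to M$, with jets of sections over the relevant open subset), and that the $\mathrm{Aut}(\R_k^\ell)$-action is free with local sections, so that $\check Q_k^\ell\to J_k^\ell(Q)$ is genuinely a principal bundle. These are precisely the points established in \cite{MMR1, A1}, and in this expository Remark they are cited rather than reproved.
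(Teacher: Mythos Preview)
Your proposal is correct, but note that the paper itself offers no proof of this Remark: it is a purely expository observation that points to the cited references \cite{MMR1, A1, AM, Primary spectrum} for all claims. Your sketch therefore goes beyond what the paper does, supplying actual arguments (the Nakayama reduction to the rank condition for openness, and the $\mathrm{Aut}(\R_k^\ell)$-orbit description for the kernel map) where the paper simply cites. Both arguments are sound and standard, and you have correctly identified which parts are self-contained versus which require the full machinery of the references.
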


\bigskip

\section{Second order partial differential equations. The geodesic $k$-field}\label{S:geodesic}

In this section we will generalize the concept of second order differential equation (Definition \ref{def: second_order_equation}) to the $k$-parametric  case.

\begin{defi} A section $D\colon Q_k^1\to Q_k^2$ (of the natural projection $Q_k^2\to Q_k^1$) is called a \emph{second order partial differential equation}. In local coordinates, we have
   $$D\colon (q^i,\dot q^i_\alpha)\mapsto (q^i,\dot q^i_\alpha,\ddot q^i_{\alpha\beta}=A_{\alpha\beta}^i(q,\dot q))$$
   for suitable functions $A_{\alpha\beta}^i\in\C^\infty(Q_k^1)$.
  Moreover, if we identify $Q_k^2$ as submanifold of the iterated $(Q_k^1)_k^1$ (Proposition \ref{Taylor}), then $D$ is a $k$-tuple of vectors fields $(D_1,\dots,D_k)$ on $Q_k^1$; in local coordinates,
  $$D_\alpha=\dot q^i_\alpha\,\frac{\partial}{\partial q^i}+A^i_{\alpha\beta}(q,\dot q)\,\frac{\partial}{\partial \dot q^i_\beta}.$$
   \end{defi}

  In other words, $D=(D_1,\dots,D_k)$ can be characterized by the following property,
  \begin{equation}\label{SOPDEpropiedad}
  \forall f\in\A,\qquad D_\alpha f=\dot\partial_\alpha f,\quad\text{and}\quad [D_\alpha,D_\beta]f=0.
  \end{equation}
  When $D$ is considered as a field of $k$-velocities on $Q_k^1$, defines a system of first order partial differential equations on $Q_k^1$ (see Definition \ref{FOPDE}). In local coordinates (as in Equation (\ref{FOPDElocal})) we get the system:
  \begin{equation}
  \begin{cases}
  {\displaystyle\frac{\partial q^i(\gamma(t))}{\partial t^\alpha}}=\dot q^i_\alpha(\gamma(t))\\[0.8em]
  {\displaystyle\frac{\partial \dot q^i_\beta(\gamma(t))}{\partial t^\alpha}}=A_{\alpha\beta}^i(q^j(\gamma(t)),\dot q^j_\nu(\gamma(t))
  \end{cases}
  \end{equation}
  \vskip .3cm
 \noindent (with $t=(t^1,\dots,t^k)$) which turns to be the following system of second order partial differential equations:
 \begin{equation}\label{SOPDElocal}
 \frac{\partial^2\,  q^i(t)}{\partial t^\alpha\partial t^\beta}=A_{\alpha\beta}^i(q^j(t),\partial q^j(t)/\partial t^\nu),\quad i=1,\dots,n\, ;\alpha,\beta=1,\dots k
 \end{equation}
   (where, in order to lighten the notation, we have put $q^i(t)$ instead of $q^i(\gamma(t))$).
 \medskip

 \subsection{Geodesic $k$-field}
 It will be essential in the sequel the concept of differential equation describing  `inertial motion' or `evolution in the absence of forces' (Newton's first law):
\begin{thm}[Geodesic $k$-field]
Each given metric $g$ on $Q$ determines canonically a second order partial differential equation $D_G$ which will be called \emph{geodesic $k$-field}.
\end{thm}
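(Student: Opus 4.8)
The plan is to imitate the ordinary construction of the geodesic field, but working pointwise in each of the $k$ velocity directions simultaneously. Fix $X_q=(X_1,\dots,X_k)\in Q_k^1$. For each $\alpha=1,\dots,k$ the pair $(q,X_\alpha)$ is an ordinary tangent vector, so there is a unique geodesic $\gamma_\alpha\colon(-\varepsilon,\varepsilon)\to Q$ of the Levi-Civita connection of $g$ with $\gamma_\alpha(0)=q$ and $\dot\gamma_\alpha(0)=X_\alpha$. We want to define $D_G(X_q)\in Q_k^2$ by specifying, in the local chart $\{q^i,\dot q^i_\alpha,\ddot q^i_{\alpha\beta}\}$, the values $\dot q^i_\alpha(D_G(X_q))=\dot q^i_\alpha(X_q)$ (so that $D_G$ is indeed a section of $Q_k^2\to Q_k^1$) together with
\[
\ddot q^i_{\alpha\beta}(D_G(X_q)):=-\,\Gamma^i_{jl}(q)\,\dot q^j_\alpha(X_q)\,\dot q^l_\beta(X_q),
\]
which is manifestly symmetric in $\alpha,\beta$ and hence a legitimate value of a $(k,2)$-velocity. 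The content of the theorem is that this prescription is (i) well defined, i.e.\ independent of the chart, hence canonical, and (ii) genuinely determined by the metric alone.

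First I would establish the chart-independence. The cleanest way is to observe that under the immersion $Q_k^2\subset(Q_k^1)_k^1$ of Proposition~\ref{Taylor}, the prescription above is exactly the statement that each component vector field $D_{G,\alpha}$ on $Q_k^1$ has the local form
\[
D_{G,\alpha}=\dot q^i_\alpha\frac{\partial}{\partial q^i}-\Gamma^\ell_{ij}\,\dot q^i_\alpha\,\dot q^j_\beta\,\frac{\partial}{\partial\dot q^\ell_\beta},
\]
i.e.\ $D_{G,\alpha}$ is obtained from the ordinary geodesic field $D_G$ of \eqref{geodesico} on $TQ$ by a natural ``$\alpha$-th copy'' construction on $Q_k^1$. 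Concretely: there is for each $\alpha$ a canonical map $Q_k^1\to TQ$ sending $(X_1,\dots,X_k)\mapsto X_\alpha$; one shows $D_{G,\alpha}$ is, up to the extra vertical coordinates, the pullback datum of $D_G$ along this map, together with the requirement $D_{G,\alpha}q^i=\dot\partial_\alpha q^i$. Since $D_G$ on $TQ$ is well defined (it is the classical geodesic field, characterized intrinsically by \eqref{ecuaciongeodesica}, $\iiota_{D_G}d\theta+dT=0$), its transform is chart-independent, and symmetry of $\Gamma^\ell_{ij}$ in $i,j$ guarantees $\ddot q^i_{\alpha\beta}$ is symmetric so that $(D_{G,1},\dots,D_{G,k})$ lands in the submanifold $Q_k^2$ rather than merely in $(Q_k^1)_k^1$. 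Alternatively, and perhaps more transparently, one checks directly that the Christoffel symbols transform so as to make the right-hand side above an honest geometric object; this is the same computation as in the $k=1$ case with two velocity indices $\alpha,\beta$ carried along as spectators.

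It then remains to record the geometric meaning, which doubles as a sanity check: a map $\gamma\colon\U\subset\R^k\to Q$ is a solution of the second order PDE $D_G$, in the sense of \eqref{SOPDElocal}, precisely when
\[
\frac{\partial^2 q^i(\gamma(t))}{\partial t^\alpha\partial t^\beta}=-\,\Gamma^i_{jl}(\gamma(t))\,\frac{\partial q^j(\gamma(t))}{\partial t^\alpha}\,\frac{\partial q^l(\gamma(t))}{\partial t^\beta},
\]
i.e.\ when every coordinate line $t^\alpha\mapsto\gamma(t)$ (the other $t^\beta$ fixed) and, more generally, every straight line in parameter space is carried by $\gamma$ to a geodesic of $g$; equivalently $\gamma^1\colon\U\to Q_k^1$ is an integral submanifold of the involutive-on-solutions system $(D_{G,1},\dots,D_{G,k})$. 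This shows $D_G$ deserves the name ``inertial motion''.

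The main obstacle I anticipate is subtle but not deep: one must make sure the $k$ vector fields $D_{G,\alpha}$ really do assemble into a \emph{section} $Q_k^1\to Q_k^2$ and not just a $k$-tuple of vector fields on $Q_k^1$ with no compatibility — that is, that the off-diagonal second derivatives are symmetric, $\ddot q^i_{\alpha\beta}=\ddot q^i_{\beta\alpha}$, which forces the choice of the \emph{symmetrized} expression $-\Gamma^i_{jl}\dot q^j_\alpha\dot q^l_\beta$ (automatic here since the Levi-Civita $\Gamma^i_{jl}$ is already symmetric in its lower indices, but worth flagging since a general connection would require symmetrizing). No integrability condition $[D_{G,\alpha},D_{G,\beta}]=0$ is needed for $D_G$ to be \emph{defined}; that bracket, whose vanishing would be a flatness-type condition, is only relevant to the existence of solutions, not to the construction of the field, so I would not address it here.
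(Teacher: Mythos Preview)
Your argument is correct but the paper takes a different, more intrinsic route. Rather than writing down the coordinate expression $-\Gamma^i_{jl}\dot q^j_\alpha\dot q^l_\beta$ first and then arguing chart-independence by reduction to the $k=1$ geodesic field along each projection $Q_k^1\to TQ$, the paper constructs $D_G(X_q)$ in one stroke from the exponential map: given $X_q=(X_1,\dots,X_k)$, compose the linear map $X_q\colon\R^k\to T_qQ$ with $\exp\colon T_qQ\to Q$ to obtain a single smooth map $\gamma(t^1,\dots,t^k)=\exp(t^\alpha X_\alpha)$, and set $D_G(X_q):=\gamma^2(0)$, the second prolongation of $\gamma$ at the origin. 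This is manifestly coordinate-free, so no separate chart-independence argument is needed, and the symmetry $\ddot q^i_{\alpha\beta}=\ddot q^i_{\beta\alpha}$ that you flag as the main obstacle comes for free because $\gamma^2(0)$ is an honest $(k,2)$-velocity (a genuine $2$-jet of a map $\R^k\to Q$). Your approach, by contrast, has the virtue of making the local formula and the link to the ordinary geodesic spray on $TQ$ explicit from the outset, at the cost of having to verify invariance and symmetry by hand; the paper only recovers the coordinate expression \emph{a posteriori} by computing the Taylor expansion of $\exp$.
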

\begin{proof}
Let us denote by $\textrm{exp}$ the exponential map associated with $g$. Given a $X_q=(X_{1},\dots,X_{k})\in Q_k^1$,  we consider the smooth map
$$\gamma\colon\R^k\to Q,\qquad (t^1,\dots,t^k)\mapsto\textrm{exp}(t^1X_{1}+\cdots+t^kX_{k}),$$
obtained by composing ${X_q}\colon\R^k\to T_qQ$ with the exponential map  ${\textrm{exp}}\colon T_qQ\to Q$ (just defined in a neighborhood of $0\in T_qQ$).
Then we can define $D_G\colon Q_k^1\to Q_k^2$ by the rule
$$D_G(X_q):=\gamma^2(0),$$
where $\gamma^2$ is the second prolongation (\ref{prolongation2}) of $\gamma$.
\end{proof}
\medskip

A straightforward computation gives $D_G=(D_1,\dots,D_k)$ with
$$D_{\alpha}=\dot q^i_\alpha\frac{\partial}{\partial q^i}-\Gamma_{jk}^i\dot q^j_\alpha \dot q^k_{\beta}\frac{\partial}{\partial \dot q^i_{\beta}}$$
and the corresponding system of second order partial differential equations:
\begin{equation}\label{SOPDEgeodesica}
 \frac{\partial^2\,  q^i(t)}{\partial t^\alpha\partial t^\beta}=-\Gamma_{jk}^i(q(t))\,
   \frac{\partial q^j(t)}{\partial t^\alpha}\,\frac{\partial q^k(t)}{\partial t^\beta}.
  \end{equation}
\bigskip

\section{The dual: $k$-symplectic canonical structure. The Liouville form}\label{S:Liouville}

The bundle $Q_k^1\to Q$ possesses  a dual $(Q_k^1)^*$, defined as the set of linear maps $T_qQ\to\R^k$,  $q\in Q$. These elements, that will be called \emph{$(k,1)$-covelocities}, correspond to tangent maps of smooth maps from (a neighborhood of $q$ in) $Q$ to $\R^k$. Equivalently, the elements of $(Q_k^1)^*$ are tuples of $k$ differential 1-forms $\sigma_q=(\sigma^1,\dots,\sigma^k)$, $\sigma^\alpha\in T^*_qQ$ or also $\sigma=\sigma^\alpha\otimes e_\alpha$ where $\{e_\alpha\}$ is the canonical basis for $\R^k$. We can endow $(Q_k^1)^*$ with a differentiable structure in the following way: if $\{q^i\}$ is a local chart on $Q$, each covelocity $\sigma_q\in (Q_k^1)^*$ can be expressed as
 $$\sigma_q=p_i^\alpha\,d_qq^i\otimes e_\alpha,$$
 for appropriate scalars $p_i^\alpha\in\R$. The sets of functions $\{q^i,p^\alpha_i\}$ so defined, provide us with  local charts on $(Q_k^1)^*$.

 In local coordinates the duality or coupling of elements $X_q=\dot q^i_\alpha(\partial/\partial q^i)_q\otimes e^\alpha\in Q_k^1$ and
    $\sigma_q=p_i^\alpha\,d_qq^i\otimes e_\alpha\in (Q_k^1)^*$ is given by
    $$\langle \sigma_q,X_q\rangle=p_i^\alpha\dot q_\alpha^i.$$

 On the other hand, it is also well defined the coupling with values in the endomorphisms of $\R^k$, that we denote as an interior product:
 \begin{equation}\label{contraccion}
\iiota_{X_q}\sigma_q\colon\R^k\to\R^k,\quad \lambda\mapsto \sigma_q(X_q(\lambda)).
 \end{equation}
 In local coordinates,
 $$i_{X_q}\sigma_q=p_i^\alpha\dot q_\beta^i\,\, e^\beta\otimes e_\alpha$$
 If we denote by $\textrm{tr}$ the trace map $\textrm{End}(\R^k)\to\R$, then
 $$\langle \sigma_q,X_q\rangle=\textrm{tr}\left(i_{X_q}\sigma_q\right).$$

\subsection{Liouville form}
As in the case of the cotangent bundle, $(Q_k^1)^*$ is endowed of a tautological structure: at each $\sigma_q\in (Q_k^1)^*$ we can define a $k$-tuple of differential 1-forms $\theta_{\sigma_q}=(\theta^1,\dots,\theta^k)$ (a $(k,1)$ covelocity on $(Q_k^1)^*$) by the rule
$$\theta^\alpha(D_{\sigma_q}):=\sigma^\alpha(\pi_*D_{\sigma_q}),$$
where $\pi\colon (Q_k^1)^*\to Q$ is the canonical projection. We call $\theta$ the (generalized) \emph{Liouville form} and, in local coordinates,
\begin{equation}\label{Liouvillelocal}
\theta^\alpha=p^\alpha_i\,dq^i\qquad\text{or}\qquad \theta=\theta^\alpha\otimes e_\alpha=p^\alpha_i\,dq^i\otimes e_\alpha.
\end{equation}

\begin{obs}\label{calculovalorado}
When we work with differential forms, vector fields or, more in general, tensor fields with values in $\R^k$, $(\R^k)^*$ or $\textrm{End}\,\R^k$, we can perform `differential operations' (exterior differential, Lie derivatives, etc.) componentwise: as an example, given a basis $\{e_\alpha\}$ of $\R^k$, if $\nu=\nu^\alpha\otimes e_\alpha$, where $\nu^\alpha$ are ordinary differential forms, then it is well defined its exterior differential $d\nu:=(d\nu^\alpha)\otimes e_\alpha$.
\end{obs}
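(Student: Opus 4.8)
\ The plan is to observe that the only thing requiring justification in the Remark is the phrase ``it is well defined'': that the componentwise recipe $d\nu:=(d\nu^\alpha)\otimes e_\alpha$ does not depend on the chosen basis $\{e_\alpha\}$. Everything then reduces to the $\R$-linearity of $d$ together with the fact that $\R^k$ (and likewise $(\R^k)^*$ and $\textrm{End}\,\R^k$) is a \emph{fixed} finite-dimensional real vector space, so that a change of basis in it is implemented by a \emph{constant} matrix.

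First I would make the standard identification: a differential form on a manifold $M$ with values in a finite-dimensional real vector space $W$ is a section of $\Lambda^\bullet T^*M\otimes_\R W$, i.e.\ an element of $\Omega^\bullet(M)\otimes_\R W$; choosing a basis $\{e_\alpha\}$ of $W$ is exactly choosing the (unique) decomposition $\nu=\nu^\alpha\otimes e_\alpha$ with $\nu^\alpha\in\Omega^\bullet(M)$. One can then introduce the intrinsic operator $d\otimes\mathrm{id}_W$ on $\Omega^\bullet(M)\otimes_\R W$, which refers to no basis, and check that in $\{e_\alpha\}$ it is computed precisely by $(d\otimes\mathrm{id}_W)(\nu^\alpha\otimes e_\alpha)=(d\nu^\alpha)\otimes e_\alpha$; this already gives the well-definedness. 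For a hands-on version: if $\{f_\beta\}$ is a second basis with $e_\alpha=A^\beta_\alpha f_\beta$ and the $A^\beta_\alpha\in\R$ constant, then $\nu=(A^\beta_\alpha\nu^\alpha)\otimes f_\beta$, so its $f$-components are $A^\beta_\alpha\nu^\alpha$, and since $d$ is $\R$-linear and commutes with multiplication by the constants $A^\beta_\alpha$ one obtains $\bigl(d(A^\beta_\alpha\nu^\alpha)\bigr)\otimes f_\beta=A^\beta_\alpha(d\nu^\alpha)\otimes f_\beta=(d\nu^\alpha)\otimes e_\alpha$, as required.

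The same two lines cover the ``etc.'' in the Remark: any $\R$-linear operator on $\Omega^\bullet(M)$ that commutes with multiplication by real constants --- in particular $\iiota_D$ and $\mathcal{L}_D$ for a fixed ordinary vector field $D$, besides $d$ itself --- extends unambiguously to $W$-valued forms by acting componentwise, by the identical argument; and for $W$-valued vector or tensor fields one reasons the same way on the relevant space of sections.

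I do not expect a genuine obstacle here: the content is essentially the observation that real constants are $d$-closed. The only point worth stressing is where this argument would fail: if $W$ were replaced by a vector bundle with \emph{non-constant} transition functions, the componentwise $d$ would depend on the local frame and would not be intrinsic (a connection would be needed). In the setting of the Remark the relevant bundle is trivial, $M\times\R^k$ (respectively $M\times(\R^k)^*$, $M\times\textrm{End}\,\R^k$), equipped with the \emph{constant} global frame $\{e_\alpha\}$, which is exactly what legitimizes the componentwise definition.
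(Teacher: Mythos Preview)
Your argument is correct and complete: the well-definedness of the componentwise exterior differential (and, by the same token, of $\iiota_D$ and $\mathcal L_D$) does reduce to the $\R$-linearity of these operators together with the constancy of the transition matrix between two bases of the fixed vector space $W$. Your intrinsic formulation via $d\otimes\mathrm{id}_W$ and your explicit change-of-basis computation are both sound, and your closing caveat about the non-trivial bundle case is apt.

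The only thing to note is that the paper does not actually supply a proof here: the statement is a \emph{Remark} (environment \texttt{obs}), offered as a standing convention with the example $d\nu:=(d\nu^\alpha)\otimes e_\alpha$ and nothing further. So there is no ``paper's own proof'' to compare against; you have written out the justification that the authors leave implicit. In that sense your proposal is more than what the paper provides, not a different route to the same destination.
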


The exterior differential of the Liouville form $\theta$ defines the so called \emph{polysymplectic} structure of $(Q_k^1)^*$:
\begin{equation}\label{symplectic}
d\theta=d \theta^\alpha\otimes e_\alpha
\end{equation}
or, in local coordinates,
$$d\theta=dp^\alpha_i\wedge dq^i\otimes e_\alpha.$$

\medskip

\subsection{Isomorphism induced by a metric}
When $Q$ is endowed with a pseudoriemannian metric $g$ (non singular symmetric 2-covariant tensor field of arbitrary signature), we can translate the structures of the velocity bundle to the covelocities one and conversely. Specifically, given $X_q\in Q_k^1$ we can define $\sigma_q\in (Q_k^1)^*$ to be the transposed
of the composition:
$$\R^k\overset{X_q}\To T_qQ\overset{g}\simeq T^*_qQ$$
followed with the identification $(\R^k)^*\simeq \R^k$, $e^\alpha\to e_\alpha$ (via the canonical basis or the canonical euclidean metric).
Hence, we obtain an isomorphism of bundles on $Q$,
\begin{equation}\label{isomorfismometrico}
Q_k^1\simeq (Q_k^1)^*
\end{equation}
which in coordinates is written as
\begin{equation}\label{isomorfismometrico2}
p^\alpha_i:=g_{ij}\,\dot q^{j\alpha} 
\end{equation}
if $g=g_{ij}\,dq^i\,dq^j$.

\begin{obs}
The raising and lowering of indices $\alpha$ is made by means of the canonical euclidean metric on $\R^k$; as an example,
$\dot q^i_\alpha=\dot q^{i\beta}\delta_{\alpha\beta}$, where $\delta_{\alpha\beta}$ is the Kronecker delta.
\end{obs}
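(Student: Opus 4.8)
The statement is a bookkeeping convention rather than a substantive claim, so the plan is to trace it back to the single choice already made in constructing the isomorphism (\ref{isomorfismometrico}), namely the identification $(\R^k)^*\simeq\R^k$, $e^\alpha\mapsto e_\alpha$. First I would record that this identification is exactly the lowering map of the canonical Euclidean metric $\delta=\delta_{\alpha\beta}\,e^\alpha\otimes e^\beta$ on $\R^k$, whose matrix $(\delta_{\alpha\beta})$ is the Kronecker delta: for $v=v^\alpha e_\alpha\in\R^k$ one has $\iiota_v\delta=v^\alpha\delta_{\alpha\beta}\,e^\beta=:v_\beta e^\beta$, whence $v_\beta=\delta_{\alpha\beta}v^\alpha$, while the inverse (raising) map has coefficients $\delta^{\alpha\beta}$, again the Kronecker delta. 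Thus moving a Greek index between its lower and upper positions is nothing but passing between the $\R^k$- and $(\R^k)^*$-components of an object under this fixed identification, with transition coefficients $\delta_{\alpha\beta}$ and $\delta^{\alpha\beta}$.

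Next I would specialize to the velocity components. By item 2 of Definition \ref{contactelements} a velocity is written $X_q=\dot q^i_\alpha(\partial/\partial q^i)_q\otimes e^\alpha$, so it lives in $T_qQ\otimes(\R^k)^*$ and its scalars $\dot q^i_\alpha$ naturally carry a lower Greek index. Transporting $X_q$ to $T_qQ\otimes\R^k$ by the identification above relabels the same object as $\dot q^{i\alpha}(\partial/\partial q^i)_q\otimes e_\alpha$, and applying the rule of the previous step gives immediately $\dot q^i_\alpha=\delta_{\alpha\beta}\,\dot q^{i\beta}=\dot q^{i\beta}\delta_{\alpha\beta}$, which is the asserted example. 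The same convention is what makes formula (\ref{isomorfismometrico2}) agree with a direct coordinate computation of the transpose in (\ref{isomorfismometrico}): the latter yields $p^\alpha_i=g_{ij}\dot q^j_\alpha$, and substituting $\dot q^j_\alpha=\dot q^{j\beta}\delta_{\alpha\beta}$ collapses this numerically to $g_{ij}\dot q^{j\alpha}$.

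There is no real obstacle here, since $(\delta_{\alpha\beta})$ is the identity matrix and every such raising or lowering is numerically trivial; its only function is to record which tensor factor lives in $\R^k$ and which in $(\R^k)^*$. The one point that deserves a line of care is uniformity: I would check that the same Euclidean identification is used for all $\R^k$- and $(\R^k)^*$-valued objects introduced so far---velocities, covelocities, the Liouville form $\theta=\theta^\alpha\otimes e_\alpha$ and the polysymplectic form $d\theta$---so that no clash of index placement or sign arises when the canonical pairing $\la\,\cdot\,,\,\cdot\,\ra$ and the $\textrm{End}(\R^k)$-valued contraction $\iiota_{X_q}\sigma_q$ are written in components. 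With $\delta$ fixed once and for all, this consistency is automatic.
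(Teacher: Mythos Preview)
Your explanation is correct, but note that the paper gives no proof at all: the statement is a \emph{Remark} recording a notational convention, not a proposition requiring argument. Your write-up supplies the justification the paper leaves implicit---tracing the raising/lowering back to the identification $(\R^k)^*\simeq\R^k$, $e^\alpha\mapsto e_\alpha$, already invoked in building the isomorphism~(\ref{isomorfismometrico})---which is the right conceptual point and more than the paper itself offers.
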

\medskip

From now on, we suppose $g$ is given and  we work indistinctly  on $Q_k^1$ or $(Q_k^1)^*$  under the isomorphism (\ref{isomorfismometrico}). For instance, we have a Liouville form on $Q_k^1$ and second order partial differential equations on $(Q_k^1)^*$.

\begin{obs}
The isomorphism induced by $g$, plays the role of the Legendre transformation in \cite{Gunther}. In fact, in a sense, it is a particular case.
\end{obs}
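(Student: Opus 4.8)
The plan is to realise the isomorphism (\ref{isomorfismometrico}) as the \emph{fibre derivative} of the natural quadratic Lagrangian attached to $g$, and to check that this fibre derivative coincides with what G\"unther \cite{Gunther} calls the Legendre transformation. First I would recall the recipe from \cite{Gunther}: to a Lagrangian $L\in\C^\infty(Q_k^1)$ one attaches the map $\mathrm{Leg}_L\colon Q_k^1\to(Q_k^1)^*$ that fixes the base point and, in the adapted coordinates $\{q^i,\dot q^i_\alpha\}$ on $Q_k^1$ and $\{q^i,p_i^\alpha\}$ on $(Q_k^1)^*$, reads
$$p_i^\alpha=\frac{\partial L}{\partial\dot q^i_\alpha};$$
the usual implicit-function argument shows that $\mathrm{Leg}_L$ is a local diffeomorphism exactly when the partial Hessian $\bigl(\partial^2L/\partial\dot q^i_\alpha\,\partial\dot q^j_\beta\bigr)$ is everywhere non-singular, i.e.\ when $L$ is regular.

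Next I would introduce the kinetic energy $T\in\C^\infty(Q_k^1)$ of the metric $g$, the evident $k$-parametric analogue of Definition \ref{cinetica}: for $X_q=(X_1,\dots,X_k)$ set $T(X_q):=\tfrac12\sum_\alpha g(X_\alpha,X_\alpha)$, which in local coordinates is $T=\tfrac12\,g_{ij}\,\dot q^i_\alpha\,\dot q^{j\alpha}$ (equivalently $T=\tfrac12\,\delta^{\alpha\beta}\,g(\dot\partial_\alpha,\dot\partial_\beta)$). A one-line differentiation, using $g_{ij}=g_{ji}$, gives $\partial T/\partial\dot q^i_\alpha=g_{ij}\,\dot q^{j\alpha}$, so that $\mathrm{Leg}_T$ is precisely the map $p_i^\alpha=g_{ij}\dot q^{j\alpha}$ of (\ref{isomorfismometrico2}); moreover the partial Hessian of $T$ is $g_{ij}\,\delta^{\alpha\beta}$, invertible iff $g$ is non-degenerate, so $T$ is a regular (in fact hyperregular) Lagrangian and $\mathrm{Leg}_T$ is a global bundle isomorphism. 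To see that this identification is canonical and not just a coincidence of charts, I would unwind the intrinsic description of (\ref{isomorfismometrico}) --- the transpose of $\R^k\overset{X_q}{\To}T_qQ\overset{g}{\simeq}T^*_qQ$ followed by $(\R^k)^*\simeq\R^k$ --- and compare it, as a bundle morphism over $Q$, with the fibre derivative $\mathrm{Leg}_T$; the two coincide. Hence (\ref{isomorfismometrico}) is the Legendre transformation of $T$, and in particular an instance of G\"unther's construction specialised to the Lagrangian $T$.

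I do not expect a genuine obstacle here; the only point requiring care is the bookkeeping of duals. G\"unther's map a priori records $\partial L/\partial\dot q^i_\alpha$ as the components of a covelocity, so one must insert the identification $(\R^k)^*\simeq\R^k$ --- the same one used above for raising and lowering the index $\alpha$ --- consistently, so that these numbers are read as the $p_i^\alpha$ of an element of $(Q_k^1)^*$ rather than of its bidual; once this convention is fixed the comparison with (\ref{isomorfismometrico2}) is immediate. The qualifier ``in a sense'' in the statement is then explained by the fact that G\"unther admits an arbitrary regular $L$, whereas the metric isomorphism corresponds to the privileged quadratic choice $L=T$.
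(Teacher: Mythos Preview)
Your argument is correct: the metric isomorphism (\ref{isomorfismometrico}) is precisely the fibre derivative of the quadratic Lagrangian $T=\tfrac12\,g_{ij}\,\dot q^i_\alpha\,\dot q^{j\alpha}$, and the computation $\partial T/\partial\dot q^i_\alpha=g_{ij}\,\dot q^{j\alpha}$ together with the non-degeneracy of the Hessian $g_{ij}\,\delta^{\alpha\beta}$ establishes exactly what the remark asserts. Your handling of the $(\R^k)^*\simeq\R^k$ identification is also the right point to flag.

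You should be aware, however, that the paper offers no proof of this statement at all: it is an \emph{obs} (remark) inserted between the definition of the metric isomorphism and the next section, and the authors simply assert the connection with G\"unther's Legendre map without elaboration. So there is nothing in the paper to compare your argument against; what you have written is a correct and self-contained justification of a claim the authors leave to the reader. The function $T$ you introduce does appear later in the paper (just after (\ref{trazacinetica})), defined in the same way, which confirms that your choice of Lagrangian is the intended one.
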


\bigskip
\section{The generalized Newton's second law}\label{S:Newton}

With all necessary tools previously introduced, in this section we will get a $k$-dimensional version of the Newton's second law.
\smallskip

Let $g$ a (non singular) metric on $Q$ of arbitrary signature. By means  of $g$ we translate the (generalized) Liouville form $\theta$ to $Q_k^1$ and define $\Tau$: the differential 1-form with values in the matrixes $\textrm{End}\,\R^k$ determined by
\begin{equation}\label{energiacinetica1}
i_{D_G}\,d\theta+\Tau=0\quad\text{or}\quad \Tau=-\left(i_{D_\alpha}\,d\theta^\beta\right)\otimes e_\beta\otimes e^\alpha
\end{equation}

In a local chart, a straightforward by lengthy computation shows the following coordinate expression:
$\Tau=\Tau_\alpha^\beta\otimes e_\beta\otimes e^\alpha$ with
\begin{equation}\label{kenergiacinetica}
\Tau_\alpha^\beta=-i_{D_{G\alpha}}d\theta^\beta=d\left(\frac 12 g_{jh}\dot q^j_\alpha\dot q^{h\beta}\right)+
          \frac 12\left(\frac{\partial g_{ij}}{\partial q^h}-\frac{\partial g_{ih}}{\partial q^j}\right)\dot q^j_\alpha \dot q^{h\beta}\,dq^i
          +\frac 12 g_{jh}\left(\dot q^j_\alpha\,d\dot q^{h\beta}-\dot q^{j\beta}\,d\dot q^h_\alpha\right).
\end{equation}

  \begin{lemma}\label{diferenciaSODE}
  Let $D,D'\colon Q_k^1\to Q_k^2$ be two second order partial differential equations. Then, their difference $V\colon Q_k^1\to Q_k^2\subset (Q_k^1)_k^1$ is vertical with respect to the tangent map of the projection $Q_k^1\to Q$. Conversely, the sum of a second order differential equation $D$ and a vertical $V$ gives a new second order differential equation.
    \end{lemma}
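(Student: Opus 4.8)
The plan is to work entirely in local coordinates, since both claims are local statements about sections of the affine bundle $Q_k^2 \to Q_k^1$, and verticality is a pointwise condition on the fibers. First I would recall that, by the very definition of a second order partial differential equation, any $D$ has the local form $D\colon (q^i,\dot q^i_\alpha)\mapsto (q^i,\dot q^i_\alpha, A^i_{\alpha\beta}(q,\dot q))$, and that under the immersion $Q_k^2 \subset (Q_k^1)_k^1$ of Proposition~\ref{Taylor} this corresponds to the $k$-tuple of vector fields $D_\alpha = \dot q^i_\alpha\,\partial/\partial q^i + A^i_{\alpha\beta}(q,\dot q)\,\partial/\partial \dot q^i_\beta$ on $Q_k^1$. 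Writing $D'$ with coefficients $A'^i_{\alpha\beta}$, the difference $D - D'$, read inside the iterated bundle, is the $k$-tuple $V_\alpha = (A^i_{\alpha\beta} - A'^i_{\alpha\beta})\,\partial/\partial \dot q^i_\beta$: the $\partial/\partial q^i$ terms cancel identically because both $D$ and $D'$ are sections over $Q_k^1$ with the same first-order data $\dot q^i_\alpha$. This makes the verticality transparent, since $\partial/\partial \dot q^i_\beta$ spans the kernel of $\pi_*\colon T(Q_k^1)\to TQ$ for the projection $\pi\colon Q_k^1\to Q$.

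The one point that deserves care, rather than a one-line dismissal, is exactly what ``vertical'' should mean here and why the coordinate description above is intrinsic. There are two projections in play, $Q_k^2 \to Q_k^1$ and $Q_k^1 \to Q$, and the statement speaks of verticality ``with respect to the tangent map of the projection $Q_k^1 \to Q$.'' So I would make precise that $V = D - D'$, being a difference of sections of the affine bundle $Q_k^2\to Q_k^1$ modeled on a vector bundle, is naturally a section $Q_k^1 \to Q_k^2$ landing in the fiber directions, and that under the inclusion $Q_k^2\subset (Q_k^1)_k^1$ these fiber directions are precisely the ones annihilated by $(\rho)_*$, where $\rho\colon Q_k^1\to Q$. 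The cleanest way to see this coordinate-freely is via the characterization (\ref{SOPDEpropiedad}): both $D$ and $D'$ satisfy $D_\alpha f = \dot\partial_\alpha f = D'_\alpha f$ for every $f\in\A$, hence $(D_\alpha - D'_\alpha)f = 0$ for all $f$ pulled back from $Q$, which is exactly the assertion that each $V_\alpha = D_\alpha - D'_\alpha$ is $\rho$-vertical.

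For the converse, I would simply reverse the computation: given a second order PDE $D$ with local coefficients $A^i_{\alpha\beta}$ and a vertical section $V$, which in coordinates must have the form $V_\alpha = B^i_{\alpha\beta}\,\partial/\partial\dot q^i_\beta$ for some functions $B^i_{\alpha\beta}$ on $Q_k^1$ (vertical for $\rho$ means no $\partial/\partial q^i$ component), set $A'^i_{\alpha\beta} := A^i_{\alpha\beta} + B^i_{\alpha\beta}$. Then $D' := D + V$ has coordinate form $(q^i,\dot q^i_\alpha)\mapsto (q^i,\dot q^i_\alpha, A'^i_{\alpha\beta})$, which is again a section of $Q_k^2\to Q_k^1$, i.e.\ a second order PDE; equivalently, $D'_\alpha f = D_\alpha f + V_\alpha f = \dot\partial_\alpha f + 0 = \dot\partial_\alpha f$, so $D'$ satisfies (\ref{SOPDEpropiedad}). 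One subtlety worth a remark is that $V$ must genuinely take values in $Q_k^2\subset(Q_k^1)_k^1$ and not merely in the full iterated bundle — that is, the coefficient tuple $B^i_{\alpha\beta}$ must be symmetric in $\alpha\beta$ — which is automatic when $V$ arises as a difference $D-D'$ but is part of the hypothesis in the converse; I would note this so the affine structure statement is unambiguous. I do not expect any real obstacle here: the whole content is the observation that the first-order data is common to all SOPDEs, so differences live purely in the fiber. The only thing to be vigilant about is not to conflate the two projections and to state the symmetry condition cleanly.
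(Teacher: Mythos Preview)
Your proposal is correct and contains exactly the paper's argument: the paper's entire proof is the one-line invocation of (\ref{SOPDEpropiedad}), namely $(D_\alpha-D'_\alpha)f=\dot\partial_\alpha f-\dot\partial_\alpha f=0$ for all $f\in\A$, which you give as your coordinate-free justification. Your additional local-coordinate discussion, the care about which projection is meant, and the remark on the $\alpha\beta$-symmetry of $B^i_{\alpha\beta}$ are all sound elaborations beyond what the paper spells out.
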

  \begin{proof}
  It follows from (\ref{SOPDEpropiedad}) that $(D_\alpha-D'_\alpha)f=\dot\partial_\alpha f-\dot\partial_\alpha f=0,\quad\forall f\in\A.$
  \end{proof}

 In local coordinates, $V$ is vertical if $V=V_\alpha\otimes e^\alpha$, with $V_\alpha=V_{\alpha\beta}^i\,\partial/\partial\dot q^i_\beta$ for certain coefficients $V_{\alpha\beta}^i$ (symmetric in $\alpha\beta$).

 A differential 1-form on $Q_k^1$ is said to be \emph{horizontal} if vanishes when is applied to a vertical vector (as before, we mean vertical with respect to the projection $Q_k^1\to Q$). In local coordinates a form is horizontal if locally is a linear combination of $dq^i$, $i=1,\dots,n$, with coefficients in $\C^\infty(Q_k^1)$.

  \begin{lemma}\label{vertical}
  For each vertical section $V\colon Q_k^1\to Q_k^2$, the inner contraction $i_Vd\theta$ is a horizontal 1-form valued in $\textrm{End}\,\R^k$. Conversely, if $i_Vd\theta$ is horizontal, then $V$ is vertical.
  \end{lemma}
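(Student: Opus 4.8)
The plan is to verify both implications by a direct local computation, using the coordinate expression for $d\theta$ together with the coordinate form of a vertical section. First I would recall from \eqref{symplectic} that in a local chart $d\theta = dp^\alpha_i\wedge dq^i\otimes e_\alpha$, and that under the metric isomorphism \eqref{isomorfismometrico2} we have $p^\alpha_i = g_{ij}\dot q^{j\alpha}$, so that
\[
dp^\alpha_i = \frac{\partial g_{ij}}{\partial q^h}\dot q^{j\alpha}\,dq^h + g_{ij}\,d\dot q^{j\alpha}.
\]
A vertical section $V$ is, by Lemma \ref{diferenciaSODE} and the remark following it, of the form $V = V_\alpha\otimes e^\alpha$ with $V_\alpha = V^i_{\alpha\beta}\,\partial/\partial\dot q^i_\beta$. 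Contracting: since $\iota_{V_\alpha}dq^h = 0$ and $\iota_{V_\alpha}d\dot q^{j\gamma} = V^{j\gamma}_{\alpha}$ (the index $\gamma$ raised by the euclidean $\delta$), only the $g_{ij}\,d\dot q^{j\gamma}$ piece of $dp^\gamma_i$ survives, and one obtains
\[
\iota_{V_\alpha}d\theta^\gamma = -\,g_{ij}V^{j\gamma}_{\alpha}\,dq^i
\]
(up to a sign depending on the convention for $\iota$ on a wedge). This is manifestly a linear combination of the $dq^i$ with coefficients in $\C^\infty(Q_k^1)$, hence horizontal; assembling over $\alpha,\gamma$ gives that $\iota_V d\theta = (\iota_{V_\alpha}d\theta^\gamma)\otimes e_\gamma\otimes e^\alpha$ is a horizontal $1$-form valued in $\textrm{End}\,\R^k$. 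This proves the forward direction.

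For the converse I would argue contrapositively, or rather by exhibiting the obstruction explicitly. Let $V$ be an arbitrary section $Q_k^1\to Q_k^2\subset(Q_k^1)_k^1$, so that $V_\alpha = W^i_\alpha\,\partial/\partial q^i + V^i_{\alpha\beta}\,\partial/\partial\dot q^i_\beta$ for some coefficients $W^i_\alpha$, $V^i_{\alpha\beta}$; verticality with respect to $Q_k^1\to Q$ is precisely the condition $W^i_\alpha\equiv 0$. Computing $\iota_{V_\alpha}d\theta^\gamma$ as above but now keeping the $\partial/\partial q^i$ part, the term $\iota_{V_\alpha}(dp^\gamma_i\wedge dq^i)$ picks up the extra contribution $-\,p^\gamma_i\,W^i_\alpha\,(\text{modulo the } dq\text{-terms})$ — more precisely, contracting $\partial/\partial q^h$ into $dp^\gamma_i\wedge dq^i$ produces a term proportional to $dp^\gamma_h$, which contains the genuinely non-horizontal differentials $d\dot q^{j\gamma}$ with coefficient $g_{hj}$. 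Since $g$ is nonsingular, this non-horizontal part vanishes if and only if $W^i_\alpha = 0$ for all $i,\alpha$, i.e. if and only if $V$ is vertical. Hence horizontality of $\iota_V d\theta$ forces $V$ to be vertical.

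I expect the only delicate point to be bookkeeping: keeping the two families of indices ($i,j,h$ running over $Q$-coordinates and $\alpha,\beta,\gamma$ over the $\R^k$-slots) straight, being consistent about the sign convention for $\iota_D$ acting on a $2$-form $dp\wedge dq$, and correctly using that raising and lowering of the Greek index is done by the euclidean $\delta$. There is no conceptual obstacle: once the coordinate expression $d\theta = dp^\alpha_i\wedge dq^i\otimes e_\alpha$ is in hand, both directions reduce to the single observation that the $d\dot q$-part of $dp^\gamma_i$ has the nonsingular matrix $g_{ij}$ as coefficient, so the non-horizontal content of $\iota_V d\theta$ is governed entirely by the $\partial/\partial q^i$-component of $V$.
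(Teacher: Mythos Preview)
Your proposal is correct and follows essentially the same approach as the paper: a direct local computation using $d\theta^\beta = dp^\beta_i\wedge dq^i$, the relation $p^\beta_i = g_{ij}\dot q^{j\beta}$, and the non-degeneracy of $g$ to control the $d\dot q$-part. The paper's proof is simply terser (it records $\iota_{V_\alpha}d\theta^\beta = g_{ij}V^{j\beta}_\alpha\,dq^j$ directly and, for the converse, notes that horizontality forces $V_\alpha q^i = 0$); note that with the standard convention $\iota_X(\omega\wedge\eta) = (\iota_X\omega)\eta - \omega(\iota_X\eta)$ the sign in your forward computation should be $+g_{ij}V^{j\gamma}_\alpha\,dq^i$, as you anticipated.
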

  \begin{proof}
  It is enough to take local coordinates: by (\ref{isomorfismometrico2}), $V_\alpha p_i^\beta=g_{ij}V_\alpha^{i\beta}$. Hence, if $V$ is vertical, then we get
  $\iiota_{V_\alpha}d\theta^\beta=g_{ij}V_\alpha^{i\beta}\,dq^j$ which is a horizontal form. Conversely, if $\iiota_{V_\alpha}d\theta^\beta$ are horizontal, then $V_\alpha q^i=0$, so that $V$ is vertical.
  \end{proof}

  From the above lemmas we get the
  \begin{thm}[Newton's second  law for fields]
  With the same notation. Each metric $g$ on $Q$ determines a univocal correspondence between horizontal 1-forms $F$ valued in $\textrm{End}\,\R^k$ and second order differential equations $D\colon Q_k^1\to Q_k^2$ given by the rule
  \begin{equation}\label{Newton1}
 \iiota_D\,d\theta+\Tau=F
  \end{equation}
  \end{thm}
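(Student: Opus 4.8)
The plan is to establish the claimed bijection by unwinding the defining equation (\ref{Newton1}) into the two pieces already understood: the geodesic $k$-field $D_G$ and the vertical part $V = D - D_G$. First I would rewrite (\ref{Newton1}) using the definition of $\Tau$ in (\ref{energiacinetica1}), namely $\iiota_{D_G}d\theta + \Tau = 0$. Subtracting, for any second order partial differential equation $D$ we get
\begin{equation*}
\iiota_D d\theta + \Tau = \iiota_D d\theta - \iiota_{D_G}d\theta = \iiota_{D-D_G}d\theta = \iiota_V d\theta,
\end{equation*}
where the middle equality uses linearity of the interior contraction componentwise (Remark \ref{calculovalorado}), and the last uses Lemma \ref{diferenciaSODE} which guarantees $V := D - D_G$ is a well-defined vertical section $Q_k^1 \to Q_k^2 \subset (Q_k^1)_k^1$. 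Thus (\ref{Newton1}) is equivalent to $\iiota_V d\theta = F$ with $V$ vertical.

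Next I would invoke Lemma \ref{vertical}: for a vertical $V$, the $\textrm{End}\,\R^k$-valued 1-form $\iiota_V d\theta$ is automatically horizontal, and conversely horizontality of $\iiota_V d\theta$ forces $V$ to be vertical. So the correspondence $D \leftrightarrow F$ factors as $D \leftrightarrow V \leftrightarrow F$, where the first arrow is the affine bijection between second order PDE's and vertical sections (Lemma \ref{diferenciaSODE}, using $D_G$ as origin), and the second arrow is $V \mapsto \iiota_V d\theta$ landing in horizontal $\textrm{End}\,\R^k$-valued forms. It remains to check that this second map is itself a bijection onto the horizontal forms. Injectivity and surjectivity are both visible in the local coordinate computation already carried out in the proof of Lemma \ref{vertical}: writing $V_\alpha = V_{\alpha\beta}^{i}\,\partial/\partial \dot q^i_\beta$ with $V_{\alpha\beta}^i$ symmetric in $\alpha\beta$, one has $\iiota_{V_\alpha}d\theta^\beta = g_{ij}V_\alpha^{i\beta}\,dq^j$. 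Since $g_{ij}$ is invertible and an arbitrary horizontal $\textrm{End}\,\R^k$-valued 1-form is $F = F_{j\alpha}^{\beta}\,dq^j\otimes e_\beta \otimes e^\alpha$ for arbitrary coefficient functions $F_{j\alpha}^\beta \in \C^\infty(Q_k^1)$, the equation $g_{ij}V_\alpha^{i\beta} = F_{j\alpha}^\beta$ determines $V_\alpha^{i\beta}$ uniquely as $V_\alpha^{i\beta} = g^{ij}F_{j\alpha}^\beta$, where $g^{ij}$ is the inverse metric.

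The one point requiring a small argument — and the place I would be most careful — is the symmetry constraint: a vertical section of $Q_k^2 \subset (Q_k^1)_k^1$ has components $V_{\alpha\beta}^i$ symmetric in $\alpha,\beta$, so I must check that the $V$ produced from an arbitrary horizontal $F$ actually satisfies $V^{i\beta}_\alpha = V^{i\alpha}_\beta$ after lowering, i.e. that the solution lands in $Q_k^2$ and not merely in $(Q_k^1)_k^1$. This is where the specific form of $\Tau$ in (\ref{kenergiacinetica}), and in particular the antisymmetric-in-$(\alpha,\beta)$ correction terms $\tfrac12 g_{jh}(\dot q^j_\alpha d\dot q^{h\beta} - \dot q^{j\beta}d\dot q^h_\alpha)$, does the bookkeeping: the contraction $\iiota_D d\theta$ of a section $D$ into $(Q_k^1)_k^1$ has a symmetric and an antisymmetric part in $(\alpha,\beta)$, and adding $\Tau$ is exactly what is needed so that the antisymmetric part of the left side of (\ref{Newton1}) is forced to vanish, which is the same as saying $V$ is valued in $Q_k^2$; equivalently, $F$ being horizontal already lives in the symmetric part because the antisymmetric part of $\iiota_{D_G}d\theta + \Tau$ is zero by construction. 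I would phrase this as: the map $D \mapsto \iiota_D d\theta + \Tau$ sends second order PDE's (sections into $Q_k^2$) precisely onto horizontal forms, with inverse $F \mapsto D_G + V$ where $V$ is the vertical section determined locally by $V_\alpha^{i\beta} = g^{ij}F_{j\alpha}^{\beta}$, and one verifies this $V$ is genuinely a section into $Q_k^2$. Chaining the two bijections $D \leftrightarrow V \leftrightarrow F$ then yields the asserted one-to-one correspondence, completing the proof.
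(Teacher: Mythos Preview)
Your core argument is exactly the paper's: rewrite $\iiota_D d\theta+\Tau=\iiota_{D-D_G}d\theta=\iiota_V d\theta$ using (\ref{energiacinetica1}), then invoke Lemmas~\ref{diferenciaSODE} and~\ref{vertical}. The paper's proof stops there; you go further and correctly flag the symmetry constraint---a vertical section of $Q_k^2$ has $V_{\alpha\beta}^i=V_{\beta\alpha}^i$, so one must check that the inverse construction lands back in $Q_k^2$ rather than merely in $(Q_k^1)_k^1$.

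However, your resolution of this point does not work. You claim that ``adding $\Tau$ is exactly what is needed so that the antisymmetric part of the left side of (\ref{Newton1}) is forced to vanish'' and that ``$F$ being horizontal already lives in the symmetric part.'' Neither is true: $\Tau$ is a fixed form independent of $D$, so adding it cannot alter which horizontal forms lie in the image of $D\mapsto \iiota_D d\theta+\Tau$; and an arbitrary horizontal $\textrm{End}\,\R^k$-valued form $F=F_{j\alpha}^\beta\,dq^j\otimes e_\beta\otimes e^\alpha$ has no reason to satisfy $F_{j\alpha\beta}=F_{j\beta\alpha}$ after lowering. The coordinate computation you cite gives $F_{j\alpha\beta}=g_{ji}V^i_{\alpha\beta}$, which is symmetric in $\alpha,\beta$ precisely because $V^i_{\alpha\beta}$ is; thus the map $D\mapsto F$ is a bijection onto horizontal forms with $F_{j\alpha\beta}$ symmetric, not onto all horizontal $\textrm{End}\,\R^k$-valued forms. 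The paper's terse proof (and its statement) glosses over this, but the right fix is to record the symmetry hypothesis on $F$---which is implicitly present anyway in (\ref{VA}) and (\ref{Newtoncoordenada})---rather than to argue it away.
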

  \begin{proof}
  I follows from Lemmas \ref{diferenciaSODE} and \ref{vertical} by taking into account that
  $D-D_G$ is vertical and $i_{D_G}d\theta+\Tau=0$.
  \end{proof}

  In local coordinates, if $D=D_G+V$ with $V=V_\alpha\otimes e^\alpha$ where $V_\alpha=V_{\alpha\beta}^{i}\frac{\partial}{\partial \dot q^i_\beta}$ and if
  $F=F_{\alpha}^\beta\otimes e_\beta\otimes e^\alpha$ with $F_{\alpha}^\beta=F_{j\alpha}^\beta\,dq^j$, then
    \begin{equation}\label{VA}
    V_{\alpha\beta}^{i}=g^{ij}F_{j\alpha\beta}.
    \end{equation}

    By combining Equations (\ref{VA}), (\ref{SOPDEgeodesica}) we get the coordinate expression of this generalized Newton's law (once the `form of work' or force $F$ is given):

    \begin{equation}\label{Newtoncoordenada}
 \frac{\partial^2\,  q^i}{\partial t^\alpha\partial t^\beta}=-\Gamma_{jk}^i(q)\,
   \frac{\partial q^j}{\partial t^\alpha}\,\frac{\partial q^k}{\partial t^\beta}+g^{ij}(q)F_{j\alpha\beta}(q),\qquad \alpha,\beta=1,\dots k,
  \end{equation}
  \smallskip

  \noindent where $t=(t^1,\dots,t^k)$, $q^i=q^i(t)$, $g^{ij}$  is the $(i,j)$ entry of the inverse matrix of $(g_{ij})$, etc., as usual.
 \bigskip

\section{Canonical equations and the Hamilton-Noether principle}\label{S:Lagrange}

In order to connect the Newton's second law with the Hamiltonian or Lagrangian theory we just have to take trace:
$$\textrm{tr}\colon \textrm{End}\,\R^k\to\R, \quad \left(a_{\alpha}^\beta\right)\mapsto a_\alpha^\alpha.$$

In particular, from (\ref{kenergiacinetica}) we see that
\begin{equation}\label{trazacinetica}
\textrm{tr}\,\Tau=d\left(\frac 12 g_{jh}\dot q^j_\alpha\dot q^{h\alpha}\right)
\end{equation}
and taking the trace on (\ref{energiacinetica1}) we get:
  $$i_{(D_{G})_\alpha}\,d\theta^\alpha+dT=0,\qquad\text{where}\quad T:=\frac 12 g_{jh}\dot q^j_\alpha\dot q^{h\alpha}.$$
The function \emph{$k$-kinetic energy} $T$ can be globally defined on $Q_k^1$ as follows:
$$T:=\frac 12\sum_\alpha g(\dot\partial_\alpha,\dot\partial_\alpha).$$

For the same reason, if we take trace on the relation (\ref{Newton1}) we arrive to
\begin{equation}\label{casiLagrange}
i_{D_\alpha}d\theta^\alpha+dT=F_\alpha^\alpha
\end{equation}

In the `conservative case', say $F_\alpha^\alpha=-dU$ for certain function $U\in\A$, we get $$i_{D_\alpha}d\theta^\alpha+dH=0,$$ where
$H:=T+U$. So, for such an $H$, we obtain the classical Hamilton's canonical equations in field theory (or Hamilton-De Donder-Weyl equations; see for instance \cite{Rund}, Ch. IV, formulae (2.26)+(2.35)):
\begin{equation}\label{Hamilton}
\begin{cases}
  {D_\alpha q^i\phantom{.}=\phantom{-}\displaystyle\frac{\partial H}{\partial\, p_i^\alpha}}\\[1em]
  {D_\alpha p_i^\alpha=\displaystyle -\frac{\partial H}{\partial\, q^i}}
  \end{cases}
\end{equation}

By construction, a solution of $D$ in (\ref{Newton1}) is always a solution of (\ref{Hamilton}). However, the converse is not true. In order to illustrate this fact, we give the following elementary
\medskip

\noindent{\bf Example.}  Let us consider $Q=\R^n$ endowed with the euclidean metric $g_{ij}=\delta_{ij}$, so that Christtoffel symbols $\Gamma_{ij}^\ell$ vanish.  Consider also the null force $F=0$.
Assume that $D$ holds (\ref{Newton1}), in such a way that $D$ is the geodesic $k$-field for the flat metric: $D_\alpha=\dot q^i_\alpha\partial/ \partial_{q^i}$. Therefore, $D$ defines the following differential equations system (Newton's second law)
\begin{equation}\label{SOPDElocalejemplo}
   \displaystyle\frac{\partial^2\,  q^i(t)}{\partial t^\alpha\partial t^\beta}=0
\end{equation}
whose solutions are
\begin{equation}\label{SOPDElocalejemplosoluciones}
   q^i(t)=a^i+b_\alpha^i\,t^\alpha,\quad a^i,b_\alpha^i\in\R.
\end{equation}

If, on the other hand, we take the trace, we get:
$\textrm{tr}\,F=0$, so that, $H=\sum_{i,\alpha}(\dot q^i_\alpha/2)^2$ and then, a little calculation gives (for (\ref{Hamilton})) the Laplace equation:
\begin{equation}
\sum_\alpha\frac{\partial^2\,  q^i(t)}{\partial\, (t^\alpha)^2}=0
\end{equation}
whose solutions are
\begin{equation}\label{SOPDElocalejemplosoluciones2}
   q^i(t)=\text{harmonic functions of $t^1,\dots,t^k$}
\end{equation}

Obviously, the set (\ref{SOPDElocalejemplosoluciones}) is only a part of (\ref{SOPDElocalejemplosoluciones2}).
\bigskip

Equations (\ref{Hamilton}) show, by mere inspection, that taking the trace on the Newton's equation, we get the canonical equations. However, in the sequel we will prove this fact intrinsically by obtaining the needed variational properties, including Hamilton's principle and  Noether's theorem.
\medskip

\noindent{\bf Infinitesimal transformations.} Let us consider a tangent vector field $v=v^i\partial/\partial q^i$ on $Q$ and denote by $\{\tau_t\}_{t\in\R}$ its local uniparametric group (or flow). Then it define by means of the tangent map a new uniparametric group on $Q_k^1$, say $\{\overline\tau_t\}_t$. The vector field that generates $\overline \tau$ is, in local coordinates,
\begin{equation}\label{prolongacion1}
\delta_v=v^i\frac{\partial}{\partial q^i}+\dot v^i_\alpha \frac{\partial}{\partial \dot q^i_\alpha},\qquad\text{where}\quad \dot v_\alpha^i=\dot\partial_\alpha v^i.
\end{equation}
More intrinsically, $\delta$ is determined by the following feature: if $f\in\A$, then $\delta_v f=vf$ and
\begin{equation}\label{infinitesimal}
\delta_v\dot\partial_\alpha f=\dot\partial_\alpha \delta_v f.
\end{equation}

The vector field $\delta_v$ will be called \emph{infinitesimal transformation induced by $v$} (or also, \emph{canonical prolongation of $v$ to $Q_k^1$}).
\medskip

Now, from (\ref{casiLagrange}) it is easy to follow the steps in \cite{MecanicaMunoz} (or also, \cite{RM}) to get
\begin{thm}[Hamilton-Noether principle]
Maintaining the above notations and $\delta=\delta_v$. It holds
\begin{equation}\label{Lagrange-Noether}
D_\alpha\langle \theta^\alpha,\delta\rangle=\langle dT+F_\alpha^\alpha,\delta\rangle
\end{equation}
In the `conservative case': $F_\alpha^\alpha=-dU$ for an appropriate function $U\in\A$, we have
\begin{equation}\label{Lagrangian}
D_\alpha\langle \theta^\alpha,\delta\rangle=\delta(L),
\end{equation}
where, by definition, $L:=T-U$ (\emph{Lagrangian function}).
\end{thm}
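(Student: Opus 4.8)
The plan is to differentiate $\langle\theta^\alpha,\delta\rangle$ along $D_\alpha$ using Cartan's calculus and then to insert the trace of Newton's equation. For any $1$-form $\omega$ and vector fields $D,\delta$ on $Q_k^1$, treating $D$ as a derivation on functions, one has the Leibniz identity $D\langle\omega,\delta\rangle=\langle\mathcal{L}_D\omega,\delta\rangle+\langle\omega,[D,\delta]\rangle$, and, since $\mathcal{L}_D\omega=\iiota_D d\omega+d\,\iiota_D\omega$,
\[
D\langle\omega,\delta\rangle=\langle\iiota_D d\omega,\delta\rangle+\delta\langle\omega,D\rangle+\langle\omega,[D,\delta]\rangle .
\]
I would apply this with $\omega=\theta^\alpha$ and $D=D_\alpha$ and sum over $\alpha$ (Einstein convention), obtaining
\[
D_\alpha\langle\theta^\alpha,\delta\rangle=\langle\iiota_{D_\alpha}d\theta^\alpha,\delta\rangle+\delta\langle\theta^\alpha,D_\alpha\rangle+\langle\theta^\alpha,[D_\alpha,\delta]\rangle ,
\]
and then evaluate the three terms separately.

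For the first term, (\ref{casiLagrange}) gives $\iiota_{D_\alpha}d\theta^\alpha=F_\alpha^\alpha-dT$, so it equals $\langle F_\alpha^\alpha-dT,\delta\rangle$. For the second term I claim $\langle\theta^\alpha,D_\alpha\rangle=2T$. Since $D-D_G$ is vertical (Lemma \ref{diferenciaSODE}) and $\theta$ is horizontal, $\langle\theta^\alpha,D_\alpha\rangle=\langle\theta^\alpha,(D_G)_\alpha\rangle$; and in the coordinates $\{q^i,\dot q^i_\alpha\}$ with the metric identification (\ref{isomorfismometrico2}), $\theta^\alpha=g_{ij}\dot q^{j\alpha}dq^i$ while $(D_G)_\alpha=\dot q^i_\alpha\,\partial/\partial q^i-\Gamma_{jk}^i\dot q^j_\alpha\dot q^k_\beta\,\partial/\partial\dot q^i_\beta$, whence $\langle\theta^\alpha,(D_G)_\alpha\rangle=g_{ij}\dot q^i_\alpha\dot q^{j\alpha}=2T$. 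Thus the second term is $\delta(2T)=2\langle dT,\delta\rangle$.

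For the third term I would show that $[D_\alpha,\delta]$ is vertical for $\pi\colon Q_k^1\to Q$, so that pairing it with the horizontal form $\theta^\alpha$ gives $0$. By definition of vertical it suffices to check $[D_\alpha,\delta]f=0$ for every $f\in\A$. Using the characterization (\ref{SOPDEpropiedad}) of a second order PDE ($D_\alpha f=\dot\partial_\alpha f$ on $\A$) and the defining property (\ref{infinitesimal}) of $\delta=\delta_v$ ($\delta f=vf$ and $\delta\dot\partial_\alpha f=\dot\partial_\alpha\delta f$), and noting that $vf\in\A$ so $D_\alpha(vf)=\dot\partial_\alpha(vf)$, one gets $[D_\alpha,\delta]f=D_\alpha(vf)-\delta(\dot\partial_\alpha f)=\dot\partial_\alpha(vf)-\dot\partial_\alpha(vf)=0$. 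Adding the three contributions yields $D_\alpha\langle\theta^\alpha,\delta\rangle=\langle F_\alpha^\alpha-dT,\delta\rangle+2\langle dT,\delta\rangle=\langle dT+F_\alpha^\alpha,\delta\rangle$, which is (\ref{Lagrange-Noether}). In the conservative case $F_\alpha^\alpha=-dU$ the right-hand side becomes $\langle d(T-U),\delta\rangle=\delta(T-U)=\delta(L)$, which is (\ref{Lagrangian}).

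The only non-formal step is the verticality of $[D_\alpha,\delta_v]$: this is precisely where the intrinsic characterizations (\ref{SOPDEpropiedad}) and (\ref{infinitesimal}) must be used, and it is essential that $\delta$ be the canonical prolongation $\delta_v$ and not merely a $\pi$-projectable field, since for a general $\delta$ one would only know its action on $\A$, not that it commutes with the operators $\dot\partial_\alpha$. The identity $\langle\theta^\alpha,D_\alpha\rangle=2T$ is a routine auxiliary computation, and everything else is the bookkeeping of Cartan's formula combined with the already-established trace relation (\ref{casiLagrange}).
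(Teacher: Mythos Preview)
Your proof is correct and follows essentially the same route as the paper's: the Leibniz identity $D_\alpha\langle\theta^\alpha,\delta\rangle=\langle L_{D_\alpha}\theta^\alpha,\delta\rangle+\langle\theta^\alpha,[D_\alpha,\delta]\rangle$, Cartan's formula together with $\iiota_{D_\alpha}\theta^\alpha=2T$ and (\ref{casiLagrange}) to identify $L_{D_\alpha}\theta^\alpha=dT+F_\alpha^\alpha$, and the verticality of $[D_\alpha,\delta]$ via (\ref{SOPDEpropiedad}) and (\ref{infinitesimal}). The only difference is cosmetic: you split the Lie derivative into its two Cartan pieces and recombine after evaluation, whereas the paper computes $L_{D_\alpha}\theta^\alpha$ in one stroke.
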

\begin{proof}
It is sufficient to show (\ref{Lagrange-Noether}).
We have $D_\alpha\langle \theta^\alpha,\delta\rangle=\langle L_{D_\alpha}\theta^\alpha,\delta\rangle+\langle \theta^\alpha,[D_\alpha,\delta]\rangle$; but, on the one hand, from (\ref{casiLagrange}) and the identity $i_{D_\alpha}\theta^\alpha=2T$ we derive
$L_{D_\alpha}\theta^\alpha=i_{D_\alpha}d\theta^\alpha+di_{D_\alpha}\theta^\alpha=dT+F_\alpha^\alpha$ and, on the other hand, $[D_\alpha,\delta]$ is vertical: if $f\in\A$,
$$[D_\alpha,\delta]f=D_\alpha(\delta f)-\delta(D_\alpha f)=\dot\partial_\alpha(\delta f)-\delta(\dot\partial_\alpha f)=0,$$
due to (\ref{infinitesimal}).
This is why $\langle \theta^\alpha,[D_\alpha,\delta]\rangle=0$ and the statement follows.
\end{proof}
\bigskip

\noindent Now, the consequences of the above theorem are obtained as in the classical case:
\medskip

\noindent \textbf{Hamilton's principle.} Let  $\gamma\colon \U\to Q$ be a solution of the second order equation $D$: that is to say, $(\gamma_*)_t(\partial/\partial t^\alpha)_t=D_{\alpha,\gamma(t)}$. In that case, for an arbitrary infinitesimal variation $\delta=\delta_v$ we have
    $$\int_{\U} (\gamma^1)^*\delta(L)\,dt=\int_\U(\gamma^1)^*D_\alpha\langle \theta^\alpha,\delta\rangle\,dt=
         \int_{\U}\frac{\partial}{\partial t^\alpha}((\gamma^1)^*\theta^\alpha(v))dt
         =\int_{\U}d\left((\gamma^1)^*\theta^\alpha(v)\,i_{\partial/\partial t^\alpha}dt\right),
         $$
         where $dt:=dt^1\wedge\dots\wedge dt^k$.
         By Stokes' theorem, the last integral equals to $\int_{\partial\,\U}(\gamma^1)^*\theta^\alpha(v)\,i_{\partial/\partial t^\alpha}dt$. Hence, when the variation $v$ vanishes along the boundary $\partial\, \U$ we get the Hamilton principle
  $$\int_{\U} (\gamma^1)^*\delta(L)\,dt=0.$$
  \medskip

 \noindent{\bf Noether conservation laws.} Let us suppose that $v$ is a symmetry of the Lagrangian function $L$; that is to say, $\delta L=0$. Let us take an arbitrary  solution  $\gamma\colon \U\to Q$  of  $D$. Then, substitution in  (\ref{Lagrangian}) and pull-back by $\gamma^1$ implies
    $$\frac{\partial}{\partial t^\alpha}((\gamma^1)^*\theta^\alpha(v))=0,$$
 which defines the \emph{conservation law} associated to the symmetry $v$.

\bigskip


\end{document}